\documentclass[a4paper, amsfonts, amssymb, amsmath, reprint, showkeys, nofootinbib, superscriptaddress]{revtex4-1} 
\usepackage[utf8]{inputenc}
\usepackage[T1]{fontenc}
\usepackage{amssymb}
\usepackage[colorinlistoftodos, color=green!40, prependcaption]{todonotes}
\usepackage{amsthm}
\usepackage{mathtools}
\usepackage{physics}
\usepackage{caption}
\usepackage{subcaption}
\captionsetup[figure]{justification=centerlast, singlelinecheck=off}
\captionsetup[subfigure]{list=true, font=large, labelfont=bf, labelformat=brace, position=top}
\usepackage[left=15mm,right=15mm,top=30mm, columnsep=15pt]{geometry}
\usepackage{adjustbox}
\usepackage{relsize}
\usepackage{placeins}
\usepackage{hyperref}
\usepackage{extarrows}
\usepackage{csquotes}
\DeclarePairedDelimiter\ceil{\lceil}{\rceil}
\DeclarePairedDelimiter\floor{\lfloor}{\rfloor}
\usepackage[]{graphicx}
\usepackage{float}
\usepackage{qcircuit}
\usepackage{amsmath}
\usepackage{braket}
\usepackage{verbatim}
\usepackage{dsfont}
\usepackage{bm}
\usepackage[shortlabels]{enumitem}
\usepackage{multirow}
\usepackage{rotating}
\numberwithin{equation}{section}
\newcommand{\Mod}[1]{\ (\mathrm{mod}\ #1)}

\newtheorem{lemma}{Lemma}
\newtheorem{proposition}{Proposition}
\newtheorem{definition}{Definition}

\begin{document}
\title{Reassessing the computational advantage of quantum-controlled ordering of gates}

\author{Martin J. Renner,}
    \email{martin.renner@univie.ac.at}
    \affiliation{Vienna Center for Quantum Science and Technology (VCQ), Faculty of Physics, University of Vienna, Boltzmanngasse 5, 1090 Vienna, Austria}
    \affiliation{Institute for Quantum Optics and Quantum Information (IQOQI), Austrian Academy of Sciences, Boltzmanngasse 3, 1090 Vienna, Austria}

\author{\v{C}aslav Brukner}
    \affiliation{Vienna Center for Quantum Science and Technology (VCQ), Faculty of Physics, University of Vienna, Boltzmanngasse 5, 1090 Vienna, Austria}
    \affiliation{Institute for Quantum Optics and Quantum Information (IQOQI), Austrian Academy of Sciences, Boltzmanngasse 3, 1090 Vienna, Austria}

\date{February 22, 2021}

\begin{abstract}
Research on indefinite causal structures is a rapidly evolving field that has a potential not only to make a radical revision of the classical understanding of space-time but also to achieve enhanced functionalities of quantum information processing. For example, it is known that indefinite causal structures provide exponential advantage in communication complexity when compared to causal protocols. In quantum computation, such structures can decide whether two unitary gates commute or anticommute with a single call to each gate, which is impossible with conventional (causal) quantum algorithms. A generalization of this effect to $n$ unitary gates, originally introduced in M.~Araújo et al., Phys.~Rev.~Lett.~113, 250402 (2014) and often called Fourier promise problem (FPP), can be solved with the quantum-$n$-switch and a single call to each gate, while the best known causal algorithm so far calls $O(n^2)$ gates. In this work, we show that this advantage is smaller than expected. In fact, we present a causal algorithm that solves the \emph{only} known specific FPP with $O(n \log(n))$ queries and a causal algorithm that solves \emph{every} FPP with $O(n\sqrt{n})$ queries. Besides the interest in such algorithms on their own, our results limit the expected advantage of indefinite causal structures for these problems.
\end{abstract}

\maketitle

\section{Introduction}
One of the most fundamental concepts in science is that of causality: the idea that events occur in a fixed order. It is embedded in the very structure of computation in which operations are performed one after the other. In particular, a quantum circuit is built out of wires, representing the quantum states, and boxes, representing the gates acting on these states in fixed order. However, it was suggested that the interplay between general relativity and quantum theory might require superseding such a paradigm \cite{hardy2005probability, Zych_2019}. Within the last decade, quantum frameworks have been developed that enable the description of indefinite causal structures in which no well-defined global order of events exists~\cite{hardy2005probability, Chiribella_2013, Oreshkov_2012}.

It was observed that the use of indefinite causal structures in information processing can solve certain tasks which cannot be completed by causally ordered quantum circuits \cite{Chiribella_2012} and exponentially reduce the communication cost in communication complexity problems \cite{Guerin_2016}. Furthermore, they can boost the rate of communication through noisy channels \cite{PhysRevLett.120.120502, salek2018quantum, chiribella2018indefinite, Guo2020, goswami2020}, although causal circuits can achieve the same or even better noise reduction \cite{abbott2018communication, Gu_rin_2019, Rubino_2021}. The computational complexity of indefinite causal structures has been studied \cite{Ara_jo_2017, Baumeler_2018} and their experimental accessibility was demonstrated in enhanced quantum photonics experiments \cite{Procopio_2015, Rubino_2017, rubino2017experimental, Goswami_2018, Guerin_experiment, taddei2020experimental}.\\

The most simple example of indefinite causality is based on the quantum switch \cite{Chiribella_2013}. In the quantum switch, two gates act on a target system and the order in which the two gates are applied is controlled by a qubit: if the state of the control qubit is $\ket{0}$, the gate $U_0$ is applied before $U_1$ whereas if the control qubit is in the state $\ket{1}$, the order is reversed. With this quantum-controlled ordering of gates, one can solve certain tasks more efficiently than with any conventional (causal) quantum algorithm. Specifically, one can determine whether two unitary gates commute or anticommute with a single call to each gate, while with any causal quantum algorithm, at least one gate has to be called twice~\cite{Chiribella_2012}.

A generalization of the quantum switch to an arbitrary number of gates is the quantum-$n$-switch. Here, depending on the state of the control system, any permutation of the $n$ gates can be applied on the target system. In order to study the computational power of this quantum-controlled ordering of gates, a promise problem was introduced in Ref.~\cite{1}. This task, which we will call Fourier promise problem (FPP) here, can be solved with the quantum-$n$-switch and a single call to each gate ($n$ queries). At the same time, it was expected that solving the same task with a causal quantum algorithm requires $O(n^2)$ queries. In a recent study, this idea is extended to other promise problems that are easier to realize experimentally~\cite{taddei2020experimental}.\\

In this work, we consider the solutions to the specific and general Fourier promise problems using both the quantum-$n$-switch and causal quantum algorithms. We find that the reduction in the query complexity using the quantum-$n$-switch is smaller than what was assumed so far. More precisely, we present a causal algorithm that solves the {\em only \/} known specific FPP with $O(n\log{n})$ queries and further, a causal algorithm that solves {\em every \/} FPP with $O(n\sqrt{n})$ queries. This reduces the expected advantage of indefinite causal structures in solving this computational task as compared to causal circuits. 

The article is structured as follows: In Section~\ref{FPP}, we give an overview of the Fourier promise problem, the solution with the quantum-$n$-switch and the best causal algorithm that uses $O(n^2)$ queries. In Section~\ref{sectowards}, we derive the property that allows us to find more efficient causal algorithms and give a first example of such an algorithm in Section~\ref{secexample3}. The two main results of this article can be found thereafter. In Section~\ref{seclog}, we present a causal algorithm that solves a specific FPP with $O(n\log{n})$ queries. In Section~\ref{secsqrt}, we give a causal algorithm that solves every FPP with $O(n\sqrt{n})$ queries.


\newcommand{\xyz}{0.15}
\newcommand{\abcd}{1.0}

\begin{figure}[h!]
\centering
\begin{subfigure}{\xyz\textwidth}
\includegraphics[width=\abcd\textwidth]{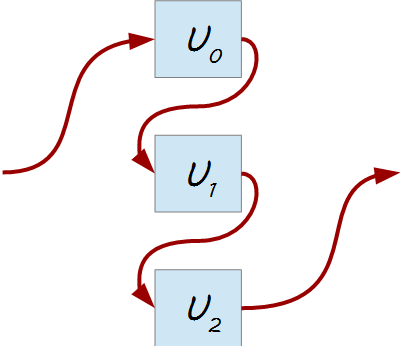}
\subcaption{$\ket{x}_c=\ket{0}$}
\end{subfigure}\hfil
\begin{subfigure}{\xyz\textwidth}
\includegraphics[width=\abcd\textwidth]{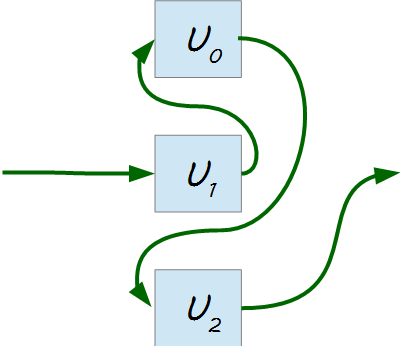}
\subcaption{$\ket{x}_c=\ket{1}$}
\end{subfigure}\hfil
\begin{subfigure}{\xyz\textwidth}
\includegraphics[width=\abcd\textwidth]{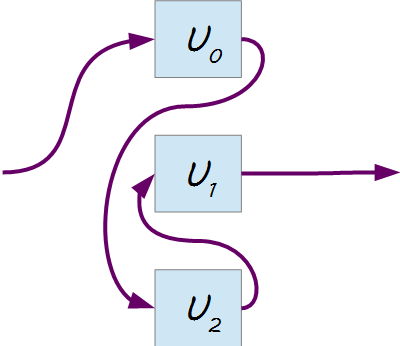}
\subcaption{$\ket{x}_c=\ket{2}$}
\end{subfigure}

\medskip
\begin{subfigure}{\xyz\textwidth}
\includegraphics[width=\abcd\textwidth]{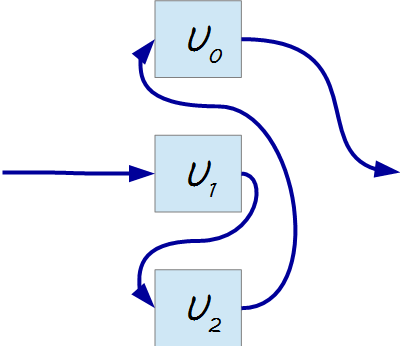}
\subcaption{$\ket{x}_c=\ket{3}$}
\end{subfigure}\hfil
\begin{subfigure}{\xyz\textwidth}
\includegraphics[width=\abcd\textwidth]{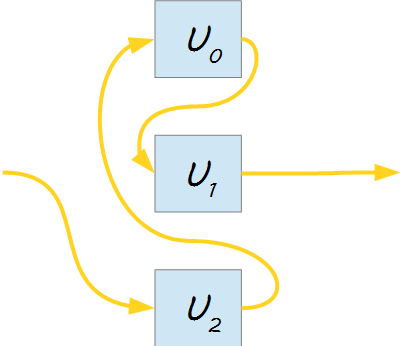}
\subcaption{$\ket{x}_c=\ket{4}$}
\end{subfigure}\hfil
\begin{subfigure}{\xyz\textwidth}
\includegraphics[width=\abcd\textwidth]{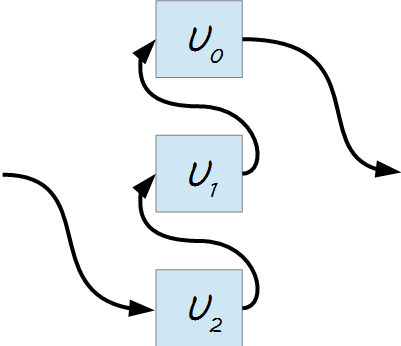}
\subcaption{$\ket{x}_c=\ket{5}$}
\end{subfigure}

\medskip
\begin{subfigure}{\xyz\textwidth}
$\frac{1}{\sqrt{6}}\sum\limits_{x=0}^5\ket{x}_c\ket{\Psi_t}$
\end{subfigure}\hfil
\begin{subfigure}{\xyz\textwidth}
\includegraphics[width=\abcd\textwidth]{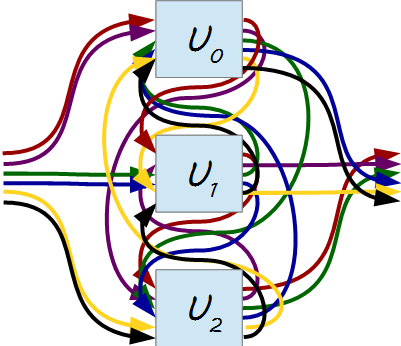}
\end{subfigure}\hfil
\begin{subfigure}{\xyz\textwidth}
$\frac{1}{\sqrt{6}}\sum\limits_{x=0}^5\ket{x}_c \Pi_x\ket{\Psi_t}$
\end{subfigure}\\

    \caption{The quantum-$3$-switch: Depending on the state of the control system, the gates act on the target system in a different order. For the case of $n=3$, each basis state of the six-dimensional control system realizes a different permutation of the gates. If the control system is initialized in a superposition, the $n$-switch can be used to solve Fourier promise problems. In this way, each unitary $U_i$ is called only once.}
    \label{fig:my_label}
\end{figure}

\section{The Fourier Promise Problem} \label{FPP}

In the Fourier promise problem, originally introduced in Ref.~\cite{1}, a set of $d$-dimensional unitary gates $\{U_i\}_0^{n-1}$ is given. Each permutation $\sigma_x$ of the $n$ unitaries is denoted as \mbox{$\Pi_x=U_{\sigma_x(n-1)}...U_{\sigma_x(1)}U_{\sigma_x(0)}$} and labeled by a number \mbox{$x\in \{0,1, ..., n!-1\}$}. It is promised that for some value \mbox{$y\in\{0,1,...,n!-1\}$}, the permutations satisfy the following relation:
\begin{align}
    \forall x\in \{0,1,...,n!-1\}:  \ \Pi_x=\omega^{x\cdot y}\cdot \Pi_0 \, .
    \label{promise}
\end{align}
Here, $\omega$ is defined as $\omega:=e^{\frac{2\pi i}{n!}}$ and the task is to find the value~$y$ for which the above promise is satisfied.\\

For example, in the case of two unitaries $U_0$ and $U_1$, the two permutations $U_1U_0$ and $U_0U_1$ can be either labeled by $\Pi_0=U_1U_0$ and $\Pi_1=U_0U_1$ or the other way around ($\Pi_1=U_1U_0$ and $\Pi_0=U_0U_1$). While the promise for $x=0$ is trivially satisfied, the promise for $x=1$, namely $\Pi_1=\omega^{1\cdot y}\cdot \Pi_0$, translates for both labelings into the fact that $U_0$ and $U_1$ either commute ($y=0$) or anticommute ($y=1$):
\begin{align}
    U_0U_1=(-1)^{y}\cdot U_1U_0 \, .
\end{align}
The task is to find out which property is the correct one.\\

For $n\geq 3$, there are  different ways to label the permutations that lead in general to inequivalent tasks (examples are given in Subsection~\ref{label3}). In this sense, Fourier promise problems form an entire class of problems and we use the term ``specific Fourier promise problem'' whenever we refer to a precise labeling of the permutations. To show that this class of problems is non-trivial, one has to prove that for every $n$ there is at least one specific FPP for which there indeed exist unitaries that satisfy the promise. This is shown in Appendix~A of the original work of M.~Araújo et al.~\cite{1}, where they construct for every $n\geq 2$ and every $y\in \{0,1,...,n!-1\}$ a set of unitaries $\{U_i\}_0^{n-1}$ that satisfy the promise $\Pi_x=\omega^{x\cdot y}\cdot \Pi_0$ for a given labeling of the permutations.\footnote{While the precise form of these unitaries is not relevant in this article (since we will only use the fact that they satisfy the promise), we want to mention that their Hilbert space dimension must be at least $d=n!$. Since these unitaries satisfy the promise, we can consider $\Pi_1=\omega^{y}\cdot \Pi_0$ and take the determinant on both sides:
\begin{align}
    det(\Pi_1)=\omega^{y\cdot d}\cdot det(\Pi_0)\, .
\end{align}
Since $\Pi_0$ and $\Pi_1$ are products of the same unitaries in different order, we obtain $det(\Pi_1)=det(\Pi_0)$ and therefore $\omega^{y\cdot d}=1$. A solution for every $y\in \{0,1,...,n!-1\}$ can only exist if $d\geq n!$.} We want to point out that, for a given $n$, this is the \emph{only} specific FPP for which the existence of these unitaries is explicitly shown (and hence the only task that is proven to be non-trivial). For this specific task, we will present in Section~\ref{seclog} a causal algorithm that is very efficient in the amount of called black-box unitaries (queries), but has the disadvantage that it cannot be adapted directly to (other possibly existing non-trivial) FPPs where the permutations are labeled differently. Note, however, that in Ref.~\cite{1} the distinction between specific FPPs is not made explicitly, since only algorithms are considered that can be adapted to every FPP (independently of the precise labeling of the permutations).

\subsection{Solution with the quantum-$n$-switch}
The quantum-$n$-switch (denoted as $S_n$ and called $n$-switch for short) is the quantum gate that applies, depending on the state of the control system $\ket{x}$, the permutation $\Pi_x$ on the target system~$\ket{\Psi_t}$:
\begin{align}
    \forall x\in\{0,1,...,n!-1\}:\ S_n\ket{x}_c\otimes\ket{\Psi_t}=\ket{x}_c\otimes \Pi_x\ket{\Psi_t} \, .
\end{align}
Moreover, since the $n!$-dimensional Fourier transform is frequently used in this article, we formally introduce it here. In symbols,
\begin{align}
    \forall y\in\{0,1,...,n!-1\}:\ F_{n!}\ket{y}=\frac{1}{\sqrt{n!}}\sum_{x=0}^{n!-1}\omega^{x\cdot y}\ket{x} \, .
\end{align}
With the use of the $n$-switch, one can solve every FPP, as described in Ref.~\cite{1}; the control system is initialized in the $n!$-dimensional state $\ket{0}_c$ and the target system $\ket{\Psi_t}$ in an arbitrary $d$-dimensional state. The Fourier transform $F_{n!}$ transforms the control system into an equal superposition of all states $x\in\{0,1,...,n!-1\}$:
\begin{align}
    \left(F_{n!}\ket{0}_c\right)\otimes \ket{\Psi_t}=\left(\frac{1}{\sqrt{n!}}\sum_{x=0}^{n!-1}\ket{x}_c\right)\otimes\ket{\Psi_t} \, .
\end{align}
Afterwards, the $n$-switch applies, depending on the state $\ket{x}$ of the control system, the permutation $\Pi_x$ on the target system~$\ket{\Psi_t}$ (see Fig.~\ref{fig:my_label} for an illustration of the map for the case of $n=3$):
\begin{align}
    S_n\left(\frac{1}{\sqrt{n!}}\sum_{x=0}^{n!-1}\ket{x}_c\right)\otimes\ket{\Psi_t}=\frac{1}{\sqrt{n!}}\sum_{x=0}^{n!-1}\ket{x}_c\otimes\Pi_x\ket{\Psi_t} \, .
\end{align}
With the promise $\Pi_x=\omega^{x\dot y}\cdot \Pi_0$, this state can be rewritten into:
\begin{align}
    \frac{1}{\sqrt{n!}}\sum_{x=0}^{n!-1}\ket{x}_c\otimes\Pi_x\ket{\Psi_t}=\left(\frac{1}{\sqrt{n!}}\sum_{x=0}^{n!-1}\omega^{x\cdot y}\ket{x}_c\right)\otimes\Pi_0\ket{\Psi_t} \, .
\end{align}
In this way, the target system becomes independent of $x$ and factorizes out in the state $\Pi_0\ket{\Psi_t}$. After applying the inverse Fourier transform on the control system, the desired value of $y$ can be read out with a measurement of the control system in the computational basis:
\begin{align}
    F_{n!}^{-1}\left(\frac{1}{\sqrt{n!}}\sum_{x=0}^{n!-1}\omega^{x\cdot y}\ket{x}_c\right)\otimes\Pi_0\ket{\Psi_t}=\ket{y}_c\otimes\Pi_0\ket{\Psi_t} \, .
\end{align}
Since the $n$-switch can apply every permutation of the unitaries with a single call to each gate, the total query complexity of this algorithm is $n$.

\subsection{Solution with causal quantum algorithms}\label{secsimswitch}

In this section, we give an overview of the best causal algorithms for FPPs that are known. All of them are based on the simulation of the $n$-switch and call $O(n^2)$ black-box unitaries. A causal quantum algorithm simulates the action of the $n$-switch (denoted as $ S_n^{\text{sim.}}$) if it implements the transformation
\begin{align}
\begin{split}
    S_n^{\text{sim.}}&\ket{x}_c\otimes\ket{\Psi_t}\otimes\left(\bigotimes^{n-1}_{i=0}\ket{a_i}\right)\\
    &=\ket{x}_c\otimes\Pi_x\ket{\Psi_t}\otimes\left(\bigotimes^{n-1}_{i=0}(U_i)^{k_i}\ket{a_i}\right)\label{simswitch}
\end{split}
\end{align}
for every $x\in\{0,1,...,n!-1\}$, arbitrary states $\ket{\Psi_t}$ and $\ket{a_i}$ as well as constants $k_i$ that do not depend on $x$. Every simulation of the $n$-switch can be used in combination with the algorithm in Fig.~\ref{figsimswitch} to solve every FPP; analogously to the $n$-switch in the last subsection, the control system is prepared with a quantum Fourier transform in an equal superposition over all states \mbox{$x\in\{0,1,...,n!-1\}$}. By linearity, an algorithm that simulates the $n$-switch implements the transformation:
\begin{align}
\begin{split}
    &S^{\text{sim.}}_n\left(\frac{1}{\sqrt{n!}}\sum_{x=0}^{n!-1}\ket{x}_c\right)\otimes\ket{\Psi_t}\otimes\left(\bigotimes^{n-1}_{i=0}\ket{a_i}\right)\\
    &=\left(\frac{1}{\sqrt{n!}}\sum_{x=0}^{n!-1}\ket{x}_c\otimes\Pi_x\ket{\Psi_t}\right)\otimes\left(\bigotimes^{n-1}_{i=0}(U_i)^{k_i}\ket{a_i}\right)\\
    &=\left(\frac{1}{\sqrt{n!}}\sum_{x=0}^{n!-1}\omega^{x\cdot y}\ket{x}_c\right)\otimes\Pi_0\ket{\Psi_t}\otimes\left(\bigotimes^{n-1}_{i=0}(U_i)^{k_i}\ket{a_i}\right)\, .
\end{split}
\end{align}

Again, the promise $\Pi_x=\omega^{x\cdot y}\cdot \Pi_0$ is used to obtain the last equality. After applying the inverse Fourier transform to the control system, the solution $y$ can be read out in the control system.\\



\newcommand{\te}{S_n^{\text{sim.}}}
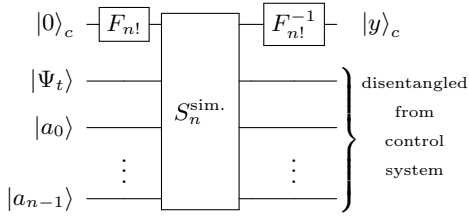
\begin{figure}
    \begin{center}
$\Qcircuit @C=0.5em @R=1em {
\lstick{\ket{0}_c}     & \gate{F_{n!}} & \multigate{4}{\te} & \qw & \gate{F^{-1}_{n!}} & \qw & \rstick{\ket{y}_c}                                                                \\
\lstick{\ket{\Psi_t}}  & \qw             & \ghost{\te}        & \qw & \qw             & \qw &                                                                                     \\
\lstick{\ket{a_0}}     & \qw             & \ghost{\te}        & \qw & \qw             & \qw & \rstick{\substack{\text{disentangled}\\\\\text{from}\\\\\text{control}\\\\\text{system}}} \\
        &    \vdots             &                       &     &  \vdots               &     &                                                                               \\
\lstick{\ket{a_{n-1}}} & \qw             & \ghost{\te}        & \qw & \qw             & \qw \gategroup{2}{5}{5}{6}{1em}{\}} & 
}$
\end{center}
    \caption{Solution of every FPP with the simulation of the quantum-$n$-switch: With a Fourier transform the control system is prepared in an equal superposition of all states $x$. The simulation of the $n$-switch $S_n^\text{sim}$ applies, depending on the state of the control system $\ket{x}$, the permutation $\Pi_x$ on the target system. The solution~$y$ can be read out after applying the inverse Fourier transform to the control system.}
    \label{figsimswitch}
\end{figure}


One algorithm that can implement the transformation $S_n^{\text{sim.}}$ is given in Fig.~\ref{figimplsimswitch}. This one was originally introduced in Ref.~\cite{Colnaghi_2012} and is also presented in Appendix~C of Ref.~\cite{1}. In each step $i=0,1,...,n-1$, the gate $\mathcal{S}$ swaps, controlled on $\ket{x}$, the target system $\ket{\Psi_t}$ with the auxiliary system $\ket{a_{\sigma_x(i)}}$. After the gate $U_{\sigma_x(i)}$ acts on $\ket{\Psi_t}$, another gate $\mathcal{S}$ swaps the two systems $\ket{\Psi_t}$ and $\ket{a_{\sigma_x(i)}}$ back. In this way, the permutation \mbox{$\Pi_x=U_{\sigma_x(n-1)}...U_{\sigma_x(1)}U_{\sigma_x(0)}$} is applied to the target system. In this algorithm, each gate $U_i$ is used exactly $n$ times, so the query complexity of this algorithm is $n^2$. Furthermore, for every permutation $\Pi_x$, each auxiliary system $\ket{a_i}$ is swapped back and forth with $\ket{\Psi_t}$ exactly once. Hence, independently of the state of the control system $\ket{x}$, the gate $U_i$ is applied once on $\ket{\Psi_t}$ and the remaining $(n-1)$ times on $\ket{a_i}$. In this way, each auxiliary system $\ket{a_i}$ ends up in the state $(U_i)^{n-1}\ket{a_i}$ and the algorithm implements the transformation given in Eq.~\eqref{simswitch} for $k_i=(n-1)$.  \footnote{Further details about the representation of the control system $\ket{x}_c$ and the implementation of the $\mathcal{S}$-gate can be found in Appendix~C of Ref.~\cite{1} and in Ref.~\cite{Colnaghi_2012}.}\\

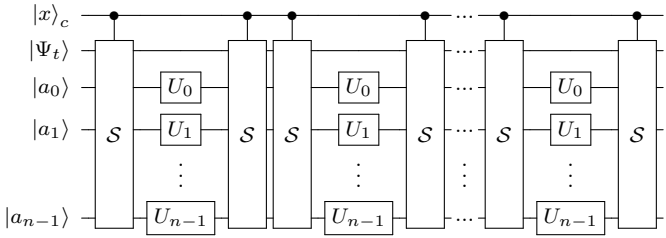
\begin{figure}[h!]

\smaller[1]
\newcommand{\ze}{7}
\begin{flushright}
$\Qcircuit @C=0.3em @R=0.5em {
\lstick{\ket{x}_c} & \qw & \ctrl{2} & \qw & \qw & \qw & \ctrl{2} &\ctrl{2} & \qw & \qw & \qw & \ctrl{2} & \qw && \text{...}&& & \ctrl{2} & \qw & \qw & \qw & \ctrl{2} & \qw \\\\
\lstick{\ket{\Psi_t}} & \qw & \multigate{\ze}{\mathcal{S}} & \qw &\qw &\qw& \multigate{\ze}{\mathcal{S}} & \multigate{\ze}{\mathcal{S}} & \qw &\qw &\qw& \multigate{\ze}{\mathcal{S}} & \qw &&\text{...}&&& \multigate{\ze}{\mathcal{S}} & \qw &\qw &\qw& \multigate{\ze}{\mathcal{S}} &\qw \\
\lstick{\ket{a_0}} & \qw & \ghost{\mathcal{S}}        & \qw &\gate{U_0}      &\qw& \ghost{\mathcal{S}}        & \ghost{\mathcal{S}}        & \qw &\gate{U_0}      &\qw& \ghost{\mathcal{S}}        & \qw &&\text{...}&&   & \ghost{\mathcal{S}}        & \qw &\gate{U_0}      &\qw& \ghost{\mathcal{S}}        & \qw \\
\lstick{\ket{a_1}} & \qw & \ghost{\mathcal{S}}        & \qw &\gate{U_1}      &\qw& \ghost{\mathcal{S}}        & \ghost{\mathcal{S}}        & \qw &\gate{U_1}      &\qw& \ghost{\mathcal{S}}        & \qw &&\text{...}&&   & \ghost{\mathcal{S}}        & \qw &\gate{U_1}      &\qw& \ghost{\mathcal{S}}        & \qw \\\\
&&&&\vdots&&&&&\vdots&&&&&\vdots&&&&&\vdots&&&&\\\\\\
\lstick{\ket{a_{n-1}}} & \qw & \ghost{\mathcal{S}}        & \qw &\gate{U_{n-1}}      &\qw& \ghost{\mathcal{S}}        & \ghost{\mathcal{S}}        & \qw &\gate{U_{n-1}}      &\qw& \ghost{\mathcal{S}}        & \qw &&  \text{...}&&& \ghost{\mathcal{S}}        & \qw &\gate{U_{n-1}}      &\qw& \ghost{\mathcal{S}}        & \qw 
}$
\end{flushright}
\normalsize
    \caption{A simulation of the $n$-switch with a causal algorithm: The permutation $\Pi_x$ is applied on $\ket{\Psi_t}$ by swapping the target system in each step $i=0,1,...,n-1$ with the auxiliary system $\ket{a_{\sigma_x(i)}}$.}
    \label{figimplsimswitch}
\end{figure}

There are simulations of the quantum-$n$-switch with causal quantum circuits that are slightly more efficient. All of them call $O(n^2)$ black-box gates in total. This is studied in Ref.~\cite{Facchini_2015} (and also in Ref.~\cite{1}).

\section{Towards more efficient causal algorithms}\label{sectowards}
\subsection{Pairwise commutation relations}\label{secpair}

In this article, we show that there are algorithms that solve Fourier promise problems by calling significantly less queries than the simulation of the $n$-switch requires. The main ingredient is a property of the unitaries that can be directly inferred from the promise.

\begin{definition}[Pairwise commutation relations]
A set of unitaries $\{U_i\}_0^{n-1}$ satisfy \emph{``pairwise commutation relations''}, if for every pair of unitaries $U_j$ and $U_k$ (\mbox{$j,k\in\{0,1,...,n-1\}$}) there exist $\alpha_{jk}\in \mathbb{C}$ such that:
\begin{align}
\begin{split}
    U_jU_k=\alpha_{jk} \cdot U_kU_j \, .
\end{split}
\end{align}
\end{definition}

\begin{proposition}\label{proposition1}
Every set of unitaries $\{U_i\}_0^{n-1}$ that satisfy the promise of a Fourier promise problem, 
\begin{align}
    \forall x\in &\{0,1,...,n!-1\}: \Pi_x=\omega^{x\cdot y}\cdot \Pi_0 \, ,
\end{align}
satisfy pairwise commutation relations. Furthermore, if for a specific FPP the labeling of the permutations is given, the pairwise commutation relations read:
\begin{align}
\begin{split}
    U_jU_k&=\omega^{(x^1_{jk}-x^2_{jk})\cdot y}\cdot U_kU_j \, ,
    \label{pair}
\end{split}
\end{align}
where $x^1_{jk}$ is the label of the permutation \mbox{$\Pi_{x^1_{jk}}=U_{n-1}...U_{1}U_{0}U_jU_k$} and $x^2_{jk}$ is the label of the permutation \mbox{$\Pi_{x^2_{jk}}=U_{n-1}...U_{1}U_{0}U_kU_j$} (with $U_{n-1}...U_{1}U_{0}$, we denote all unitaries without $U_j$ and $U_k$ in descending order). \footnote{Note that with some caution, one could turn this statement into an equivalence: the condition of pairwise commutation relations is not only a necessary condition for the promise to hold. One can also check by direct calculation that whenever all permutation relations are pairwise \mbox{($\forall j\neq k\ \exists \ x_{jk}\in \mathbb{Z} \text{ s.t. } U_jU_k=\omega^{x_{jk}\cdot y}\cdot U_kU_j$)}, all permutations of the $n$ unitaries are related by a phase \mbox{($\Pi_x=\omega^{x\cdot y}\ \Pi_0$)}. In order for the promise to hold, one has to choose the pairwise phases $x_{jk}$ such that every $x\in\{0,1,...,n!-1\}$ appears exactly once. Nevertheless, it is enough for our purpose that the promise induces pairwise commutation relations.}

\end{proposition}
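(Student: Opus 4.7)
The plan is to prove the pairwise commutation relations by exhibiting, for each pair $(j,k)$, two specific permutations in $\{\Pi_x\}$ that agree except for a transposition of $U_j$ and $U_k$ in adjacent positions, and then to read off the commutation phase directly from the promise.

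First I would fix a pair $j\neq k$ and consider the two permutations singled out in the statement: namely $\Pi_{x^1_{jk}} = U_{n-1}\cdots U_1 U_0\, U_j U_k$ and $\Pi_{x^2_{jk}} = U_{n-1}\cdots U_1 U_0\, U_k U_j$, where the prefix $U_{n-1}\cdots U_1U_0$ denotes the remaining $n-2$ unitaries written in descending order. The crucial observation is that these two products are identical except for the order of their last two factors. Since every permutation of the $n$ unitaries appears exactly once in $\{\Pi_x\}_{x=0}^{n!-1}$, both $x^1_{jk}$ and $x^2_{jk}$ are well-defined labels in $\{0,1,\dots,n!-1\}$.

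Next I would apply the promise to both labels and combine. From $\Pi_{x^1_{jk}} = \omega^{x^1_{jk}\cdot y}\Pi_0$ and $\Pi_{x^2_{jk}} = \omega^{x^2_{jk}\cdot y}\Pi_0$, eliminating $\Pi_0$ gives
\begin{align*}
U_{n-1}\cdots U_1 U_0\, U_j U_k \;=\; \omega^{(x^1_{jk}-x^2_{jk})\cdot y}\cdot U_{n-1}\cdots U_1 U_0\, U_k U_j.
\end{align*}
Because each $U_i$ is unitary (hence invertible), I can left-multiply by $(U_{n-1}\cdots U_1 U_0)^{-1}$ and obtain exactly
\begin{align*}
U_j U_k = \omega^{(x^1_{jk}-x^2_{jk})\cdot y}\cdot U_k U_j,
\end{align*}
which is the claimed pairwise commutation relation with $\alpha_{jk} = \omega^{(x^1_{jk}-x^2_{jk})\cdot y}$. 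Since $(j,k)$ was arbitrary, the first (definitional) part of the proposition follows immediately by quantifying over all pairs.

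There is essentially no serious obstacle: the whole argument is a one-line cancellation once the two distinguished permutations are identified. The only point that deserves care is bookkeeping of the labeling convention, i.e.\ verifying that $\Pi_{x^1_{jk}}$ and $\Pi_{x^2_{jk}}$ really are the permutations listed above and not some other pair; this is a matter of definition rather than proof. I would close by remarking that the explicit formula in Eq.~\eqref{pair} depends on the labeling only through the difference $x^1_{jk} - x^2_{jk}$, which is what one should expect since the commutator phase of two fixed unitaries is a property of the unitaries themselves and cannot depend on how the permutations are enumerated.
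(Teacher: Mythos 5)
Your argument is correct and follows exactly the paper's own proof: you single out the same two permutations $\Pi_{x^1_{jk}}$ and $\Pi_{x^2_{jk}}$, eliminate $\Pi_0$ via the promise, and cancel the common prefix $U_{n-1}\cdots U_1 U_0$ by left-multiplying with its inverse to arrive at Eq.~\eqref{pair}. Nothing is missing; your closing remark that the phase depends only on the difference $x^1_{jk}-x^2_{jk}$ is a sensible sanity check, though not needed for the proof itself.
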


\begin{proof}
For every pair of black-box unitaries $U_j$ and $U_k$, we focus on the two permutations introduced in the statement above:
\begin{align}
    \Pi_{x^1_{jk}}&=U_{n-1}...U_{1}U_{0}U_jU_k \text{ and }\\
    \Pi_{x^2_{jk}}&=U_{n-1}...U_{1}U_{0}U_kU_j \, .
\end{align}
Due to the promise $\Pi_x=\omega^{x\cdot y}\ \Pi_0$, both permutations are equal to $\Pi_0$ up to the phases $\omega^{-x^1_{jk}\cdot y}$ and $\omega^{-x^2_{jk}\cdot y}$, respectively:
\begin{align}
    \omega^{-x^1_{jk}\cdot y}\cdot \Pi_{x^1_{jk}}=\Pi_0=\omega^{-x^2_{jk}\cdot y}\cdot \Pi_{x^2_{jk}} \, .
\end{align}
Using the above expression for the two permutations $\Pi_{x^1_{jk}}$ and $\Pi_{x^2_{jk}}$, we obtain
\begin{align}
\begin{split}
    \omega^{-x^1_{jk}\cdot y}\cdot U_{n-1}...U_{0}U_jU_k&=\omega^{-x^2_{jk}\cdot y}\cdot U_{n-1}...U_{0}U_kU_j \, .
\end{split}
\end{align}
Multiplying from the left step by step with the inverses of $U_{n-1}$, $U_{n-2}$, ..., $U_{1}$ and $U_{0}$ (for $U_j$ and $U_k$ this is left out), this expression is equivalent to:
\begin{align}
\begin{split}
    U_jU_k&=\omega^{(x^1_{jk}-x^2_{jk})\cdot y}\cdot U_kU_j \, .  
\end{split}
\end{align}
With \mbox{$\alpha_{jk}:=\omega^{(x^1_{jk}-x^2_{jk})\cdot y}\in \mathbb{C}$}, we conclude that every set of unitaries that satisfies the promise also satisfies pairwise commutation relations.
\end{proof}

\subsection{Structure of the new algorithms}\label{newmethod}

\begin{figure}[h!]
\newcommand{\tex}{T_n^{\text{FPP}}}
\centering
$\Qcircuit @C=0.5em @R=1em {
\lstick{\ket{0}_c}     & \gate{F_{n!}} & \multigate{6}{\tex} & \qw & \gate{F^{-1}_{n!}} & \qw & \rstick{\ket{y}_c}                                                                  \\
\lstick{\ket{\Psi_1}}  & \qw             & \ghost{\tex}        & \qw & \qw             & \qw &                                                                                     \\
      &   \vdots              &                       &     &   \vdots              &     &                                                                                     \\
\lstick{\ket{\Psi_m}}  & \qw             & \ghost{\tex}        & \qw & \qw             & \qw & \rstick{\substack{\text{disentangled}\\\\\text{from}\\\\\text{control}\\\\\text{system}}} \\
\lstick{\ket{a_0}}     & \qw             & \ghost{\tex}        & \qw & \qw             & \qw &                                                                                     \\
       & \vdots                &                       &     &  \vdots               &     &                                                                                     \\
\lstick{\ket{a_{n-1}}} & \qw             & \ghost{\tex}        & \qw & \qw             & \qw \gategroup{2}{6}{7}{6}{1em}{\}} &        
}$
    \caption{General structure of the algorithms in this article: With a quantum Fourier transform the control system is initialized in an equal superposition of all states $\ket{x}$. After the transformation $T_n^{\text{FPP}}$ (see main text) is applied, the final state of each target and auxiliary system becomes independent of $x$ and the solution~$y$ can be read out by a measurement of the control system in the Fourier basis.}
    \label{figgeneral}
\end{figure}
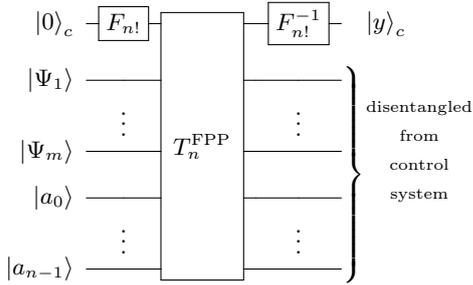
All causal algorithms that we present in this article are of the form given in Fig.~\ref{figgeneral}. The target systems $\ket{\Psi_j}$ and auxiliary systems $\ket{a_i}$ are initialized in an arbitrary $d$-dimensional state. The important part of the algorithm is the one that realizes the transformation $T_n^{\text{FPP}}$:
\begin{align}
\begin{split}
    &T_n^{\text{FPP}}\ket{x}_c\otimes\bigotimes^{m}_{j=1}\ket{\Psi_j}\otimes\bigotimes^{n-1}_{i=0}\ket{a_i}\\
    &=\ket{x}_c\otimes\bigotimes^{m}_{j=1}f^x_j(U_0,..., U_{n-1})\ket{\Psi_j}\otimes\bigotimes^{n-1}_{i=0}(U_i)^{k_i}\ket{a_i}\\
    &=\omega^{x\cdot y}\ket{x}_c\otimes\bigotimes^{m}_{j=1}f^0_j(U_0,..., U_{n-1})\ket{\Psi_j}\otimes\bigotimes^{n-1}_{i=0}(U_i)^{k_i}\ket{a_i} \, . \label{deftfpp}
\end{split}
\end{align}
On each target system $\ket{\Psi_j}$, some of the $n$ black-box unitaries from the set are applied, such that each of them ends up in a state $f^x_j(U_0,..., U_{n-1})\ket{\Psi_j}$. The unitaries contained within $f^x_j(U_0,..., U_{n-1})$ are the same for every $x$, but the order in which these unitaries are applied on the target system will explicitly depend on $x$. Since all commutation relations are pairwise, one can always rewrite this expression into $f^0_j(U_0,..., U_{n-1})\ket{\Psi_j}$ and a phase is obtained whenever two unitaries are commuted (with $e^{i\phi_j(x)}$ we denote the product of these phases):
\begin{align}
    f^x_j(U_0,..., U_{n-1})\ket{\Psi_j}=e^{i\phi_j(x)}\  f^0_j(U_0,..., U_{n-1})\ket{\Psi_j} \, .
\end{align}
In this way, the final state of each target system becomes independent of $x$ and if the algorithm is designed carefully, all these phases multiply together to $\omega^{x\cdot y}$. \\

Whenever we find an implementation of this transformation $T_n^{\text{FPP}}$ for every $x\in\{0,1,...,n!-1\}$ we can solve the corresponding FPP. The control system is initialized in an equal superposition of all $x$ and due to linearity, the transformation $T_n^{\text{FPP}}$ realizes:
\begin{align}
\begin{split}
    &T_n^{\text{FPP}}\left(\frac{1}{\sqrt{n!}}\sum_{x=0}^{n!-1}\ket{x}_c\right)\otimes\left(\bigotimes^{m}_{j=1}\ket{\Psi_j}\right)\otimes\left(\bigotimes^{n-1}_{i=0}\ket{a_i}\right)\\
    &=\left(\frac{1}{\sqrt{n!}}\sum_{x=0}^{n!-1}\omega^{x\cdot y}\ket{x}_c\right)\otimes\left(\bigotimes^{m}_{j=1}f^0_j(U_0,..., U_{n-1})\ket{\Psi_j}\right)\\
    &\hspace{2cm}\otimes\left(\bigotimes^{n-1}_{i=0}(U_i)^{k_i}\ket{a_i}\right) \, .
\end{split}
\end{align}
At the end, the solution~$y$ can be read out after applying the inverse Fourier transform to the control system (see Fig.~\ref{figgeneral}).\\

Intuitively speaking, the pairwise commutation relations allow us to simulate different parts of the total phase $\omega^{x\cdot y}$ on different target systems. Note that the best causal algorithms so far that are based on the simulation of the $n$-switch (Subsection~\ref{secsimswitch}) can be seen as a special case of this new method with only one target system ($m=1$) and $f^x_1(U_0,..., U_{n-1})=\Pi_x$. Hence, this new procedure is more general and usually more efficient than simulating every permutation on its own. We want to point out that these ideas can in principle be applied to every set of unitaries satisfying pairwise commutation relations. In this sense, this method may have some applications beyond Fourier promise problems.\\


\section{More efficient solutions for FPPs with three unitaries}\label{secexample3}
In this section, we show in a first example how pairwise commutation relations are useful to find more efficient causal algorithms. More precisely, we present an algorithm that solves every FPP for $n=3$ with six queries, while the best causal algorithm that was known so far used seven queries. While this difference might not seem significant at first, we will show in the next sections that similar ideas can be used for a significant reduction in the number of used black-box gates in the asymptotic limit.\\

\subsection{Two possible ways to label the permutations}\label{label3}
Before we present the algorithms, we give two specific examples of labelings for $n=3$. We denote with \mbox{$x_{ijk}\in \{0,1,...,3!-1\}$} the label of the permutation $U_iU_jU_k$: 
\begin{align}
    0=x_{210}&&\Pi_0=U_2U_1U_0&\label{fac31} \\
    1=x_{201}&&\Pi_1=U_2U_0U_1&=\omega^{1\cdot y}\ U_2U_1U_0\\
    2=x_{120}&&\Pi_2=U_1U_2U_0&=\omega^{2\cdot y}\ U_2U_1U_0\\
    3=x_{021}&&\Pi_3=U_0U_2U_1&=\omega^{3\cdot y}\ U_2U_1U_0\\
    4=x_{102}&&\Pi_4=U_1U_0U_2&=\omega^{4\cdot y}\ U_2U_1U_0\\
    5=x_{012}&&\Pi_5=U_0U_1U_2&=\omega^{5\cdot y}\ U_2U_1U_0 \, . \label{fac3}
\end{align}
If it is promised that the three unitaries satisfy these conditions, it is straightforward to read off the pairwise commutation relations: from the second line $U_2U_0U_1=\omega^{1\cdot y}\ U_2U_1U_0$, it follows that $U_0U_1=\omega^{1\cdot y}\ U_1U_0$, and from the third line, one can conclude that $U_1U_2=\omega^{2\cdot y}\ U_2U_1$. For the last pair $U_0$ and $U_2$, we have to put in some more work: Due to Proposition~\ref{proposition1}, we can compare the fifth ($U_1U_0U_2$) and the third line ($U_1U_2U_0$) to obtain:
\begin{align}
    \begin{split}
       \omega^{-4\cdot y}\ U_1&U_0U_2=U_2U_1U_0=\omega^{-2\cdot y}\ U_1U_2U_0\\
       &\implies U_0U_2=\omega^{2\cdot y}\ U_2U_0.
    \end{split}
\end{align}


There exist other FPPs that correspond to other labelings. Another example is the following:
\begin{align}
    0=x_{210}&&\Pi_0=U_2U_1U_0&\\
    1=x_{102}&&\Pi_1=U_1U_0U_2&=\omega^{1\cdot y}\ U_2U_1U_0\\
    2=x_{201}&&\Pi_2=U_2U_0U_1&=\omega^{2\cdot y}\ U_2U_1U_0\\
    3=x_{012}&&\Pi_3=U_0U_1U_2&=\omega^{3\cdot y}\ U_2U_1U_0\\
    4=x_{120}&&\Pi_4=U_1U_2U_0&=\omega^{4\cdot y}\ U_2U_1U_0\\
    5=x_{021}&&\Pi_5=U_0U_2U_1&=\omega^{5\cdot y}\ U_2U_1U_0 \, .
\end{align}
Here, the pairwise commutation relations can be read off as follows: $U_0U_1=\omega^{2\cdot y}\ U_1U_0$, $U_0U_2=\omega^{3\cdot y}\ U_2U_0$ and $U_1U_2=\omega^{4\cdot y}\ U_2U_1$. On the other hand, knowing all pairwise phases uniquely determines the labeling (up to the freedom of choosing $\Pi_0$).\\

Note that not every labeling is meaningful. Some of them lead to trivial statements:
\begin{align}
    0=x_{210}&&\Pi_0=U_2U_1U_0&\\
    1=x_{201}&&\Pi_1=U_2U_0U_1&=\omega^{1\cdot y}\ U_2U_1U_0\\
    2=x_{102}&&\Pi_2=U_1U_0U_2&=\omega^{2\cdot y}\ U_2U_1U_0\\
    3=x_{120}&&\Pi_3=U_1U_2U_0&=\omega^{3\cdot y}\ U_2U_1U_0\\
    4=x_{012}&&\Pi_4=U_0U_1U_2&=\omega^{4\cdot y}\ U_2U_1U_0\\
    5=x_{021}&&\Pi_5=U_0U_2U_1&=\omega^{5\cdot y}\ U_2U_1U_0 \, .
\end{align}
From the second line, it follows $U_0U_1=\omega^{1\cdot y}\ U_1U_0$, while from comparing $\Pi_2=U_1U_0U_2$ with $\Pi_4=U_0U_1U_2$, one obtains $U_0U_1=\omega^{2\cdot y}\ U_1U_0$. This is a contradiction whenever $y\neq 0$ (note that $\omega=e^{\frac{2\pi i}{n!}} \neq 1$ for $n\geq 2$). More precisely, only for $y=0$, there exist unitaries $U_0$, $U_1$ and $U_2$ that satisfy this promise, and the task becomes trivial (since one can conclude directly that the solution must be $y=0$). By counting, we found that there are 24 different possible labelings of the six permutations that lead to non-trivial solutions for $n=3$ if we restrict ourselves to $x_{210}=0$ ($\Pi_0=U_2U_1U_0$).

\subsection{Standard causal algorithm with seven queries}\label{secexample37}
The best causal algorithms known so far that solve these problems are based on the simulation of the 3-switch and call seven black-box unitaries. One possible algorithm that can achieve this is given in Fig.~\ref{figex37}. The gates $\mathcal{R}$ denote rewirings of the target and auxiliary systems (a combination of controlled swaps). Depending on the state of the control system $\ket{x_{ijk}}$, they interchange the wires in a way that the gates act on the systems according to Table~\ref{tabex37}. All underlined gates \underline{$U_i$} act on $\ket{\Psi_t}$ and simulate the permutation $U_iU_jU_k$, while the remaining (unused) gates $U_0$ and $U_1$ act on the corresponding auxiliary systems $\ket{a_0}$ and $\ket{a_1}$, respectively.

\begin{figure}[h!]
\centering
\smaller[2]
\begin{flushright}
$\Qcircuit @C=0.1em @R=0.7em {
\lstick{\ket{x}_c}& \qw & \ctrl{1}& \qw & \qw & \qw & \ctrl{1}& \qw & \qw & \qw & \ctrl{1} & \qw & \qw & \qw & \ctrl{1}  & \qw & \qw & \qw & \ctrl{1} & \qw & \qw & \qw & \ctrl{1}                   & \qw & \qw & \qw & \ctrl{1} & \qw & \qw & \qw & \ctrl{1}                   & \qw \\
\lstick{\ket{\Psi_{t}}} & \qw & \multigate{2}{\mathcal{R}} & \qw &\gate{U_1} &\qw& \multigate{2}{\mathcal{R}} & \qw &\gate{U_0} &\qw& \multigate{2}{\mathcal{R}} & \qw &\gate{U_1} &\qw& \multigate{2}{\mathcal{R}} & \qw &\gate{U_2} &\qw& \multigate{2}{\mathcal{R}} & \qw &\gate{U_1} &\qw& \multigate{2}{\mathcal{R}} & \qw &\gate{U_0} &\qw& \multigate{2}{\mathcal{R}} & \qw &\gate{U_1} &\qw& \multigate{2}{\mathcal{R}} & \qw\\
\lstick{\ket{a_0}} & \qw & \ghost{\mathcal{R}}        & \qw &\qw      & \qw & \ghost{\mathcal{R}}        & \qw &\qw      &\qw& \ghost{\mathcal{R}}        & \qw &\qw      &\qw& \ghost{\mathcal{R}}        & \qw &\qw      &\qw& \ghost{\mathcal{R}}        & \qw &\qw      &\qw& \ghost{\mathcal{R}}        & \qw &\qw      &\qw& \ghost{\mathcal{R}}        & \qw &\qw      &\qw& \ghost{\mathcal{R}}        & \qw\\
\lstick{\ket{a_1}} & \qw & \ghost{\mathcal{R}}        & \qw &\qw      & \qw & \ghost{\mathcal{R}}        & \qw &\qw      &\qw& \ghost{\mathcal{R}}        & \qw &\qw      &\qw& \ghost{\mathcal{R}}        & \qw &\qw      &\qw& \ghost{\mathcal{R}}        & \qw &\qw      &\qw& \ghost{\mathcal{R}}        & \qw &\qw      &\qw& \ghost{\mathcal{R}}        & \qw &\qw      &\qw& \ghost{\mathcal{R}}        & \qw 
}$
\end{flushright}
\normalsize
    \caption{Simulation of the $3$-switch ($S^\text{sim.}_3$) with the smallest possible number of used black-box gates.}
    \label{figex37}
\end{figure}
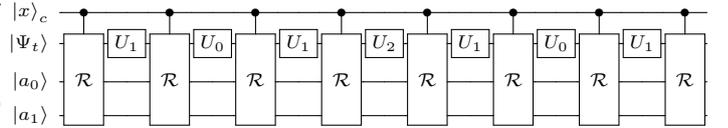

\begin{table}[h!]
    \centering
    \begin{tabular}{l|l}
    &$S^\text{sim.}_3\left(\ket{x_{ijk}}\otimes \ket{\Psi_t}\otimes \ket{a_0}\otimes \ket{a_1}\right)$\\\hline
    $U_1$\underline{$U_0U_1U_2$}$U_1U_0U_1$
    &$\ket{x_{210}}\otimes \underline{U_2U_1U_0} \ket{\Psi_{t}}\otimes U_0 \ket{a_0}\otimes (U_1)^3 \ket{a_1}$
    \\
    \underline{$U_1U_0$}$U_1$\underline{$U_2$}\textcolor{black}{$U_1U_0$}\textcolor{black}{$U_1$}
    &$\ket{x_{201}}\otimes \underline{U_2U_0U_1}\ket{\Psi_{t}}\otimes U_0 \ket{a_0}\otimes (U_1)^3 \ket{a_1}$
    \\
    $U_1$\underline{$U_0$}\textcolor{black}{$U_1$}\underline{$U_2U_1$}\textcolor{black}{$U_0U_1$} 
    &$\ket{x_{120}}\otimes \underline{U_1U_2U_0} \ket{\Psi_{t}}\otimes U_0 \ket{a_0}\otimes (U_1)^3 \ket{a_1}$
    \\
    $U_1$\textcolor{black}{$U_0$}\underline{$U_1$}\underline{$U_2$}\textcolor{black}{$U_1$}\underline{$U_0$}\textcolor{black}{$U_1$} 
    &$\ket{x_{021}}\otimes  \underline{U_0U_2U_1} \ket{\Psi_{t}}\otimes U_0 \ket{a_0}\otimes (U_1)^3 \ket{a_1}$
    \\
    $U_1$\textcolor{black}{$U_0$}\textcolor{black}{$U_1$}\underline{$U_2$}\textcolor{black}{$U_1$}\underline{$U_0$}\underline{$U_1$} 
    &$\ket{x_{102}}\otimes  \underline{U_1U_0U_2} \ket{\Psi_{t}}\otimes U_0 \ket{a_0}\otimes (U_1)^3 \ket{a_1}$
    \\
    $U_1$\textcolor{black}{$U_0$}\textcolor{black}{$U_1$}\underline{$U_2$}\underline{$U_1$}\underline{$U_0$}\textcolor{black}{$U_1$} 
    &$\ket{x_{012}}\otimes \underline{U_0U_1U_2} \ket{\Psi_{t}}\otimes U_0 \ket{a_0}\otimes (U_1)^3 \ket{a_1}$
    \end{tabular}
    \caption{Final state for every $\ket{x_{ijk}}$ of the algorithm in Fig.~\ref{figex37}.}
    \label{tabex37}
\end{table}

In total, the combined control and target system simulate the action of the 3-switch, since every permutation of the three unitaries can be applied on the target system, while the two auxiliary systems always end up in the same state
\begin{align}
\begin{split}
    &S^\text{sim.}_3\ \ket{x_{ijk}}\otimes \ket{\Psi_t}\otimes \ket{a_0}\otimes \ket{a_1}\\
    &=\ket{x_{ijk}}\otimes U_iU_jU_k\ket{\Psi_t}\otimes U_0 \ket{a_0}\otimes (U_1)^3 \ket{a_1}
    \, .  \label{simswitch3}
\end{split}
\end{align}
As explained in Subsection~\ref{secsimswitch}, this solves every FPP as the 3-switch itself.

It is essential for an algorithm that simulates the $n$-switch that every permutation of the unitaries can be applied on the target system by rewiring the systems in some way. Here, for $n=3$, every permutation has to appear as a substring in $U_1U_0U_1U_2U_1U_0U_1$. The minimal length of a string of elements $U_0, U_1, ..., U_{n-1}$ such that all possible permutations of the $n$ elements are contained in the string as a substring is a well-studied problem in combinatorics: it is known that the minimal number of elements in such a string is of the order of $O(n^2)$ \cite{Kleitman_1976}. For $3\leq n \leq 7$, the shortest string containing all permutations as a substring has length $n^2-2n+4$ \cite{newey}. For higher $n$, more efficient constructions are known \cite{zalinescu, radomirovic}. For this reason, no string of length smaller than seven can contain all the six permutations of three unitaries. 

\subsection{More efficient causal algorithm with six queries}\label{secdrei}
Here, we will show that we can solve every FPP for $n=3$ with only six queries using the algorithm given in Fig.~\ref{figex36}. As before, the gates $\mathcal{R}$ are rewirings of the target and auxiliary systems. They interchange the wires in a way that the gates act on the systems according to Tab.~\ref{tabex36}. All underlined gates \underline{$U_i$} act on $\ket{\Psi_1}$, all overlined gates $\overline{U_i}$ on $\ket{\Psi_2}$ and the remaining gate $U_1$ acts on the auxiliary system $\ket{a_1}$.

\begin{figure}[h!]
\smaller[1]
\begin{flushright}
$\Qcircuit @C=0.1em @R=0.7em {
\lstick{\ket{x}_c}& \qw & \ctrl{1}& \qw & \qw & \qw & \ctrl{1} & \qw & \qw & \qw & \ctrl{1}  & \qw & \qw & \qw & \ctrl{1} & \qw & \qw & \qw & \ctrl{1}                   & \qw & \qw & \qw & \ctrl{1} & \qw & \qw & \qw & \ctrl{1}                   & \qw \\
\lstick{\ket{\Psi_{1}}} & \qw & \multigate{2}{\mathcal{R}} & \qw &\gate{U_0} &\qw& \multigate{2}{\mathcal{R}} & \qw &\gate{U_1} &\qw& \multigate{2}{\mathcal{R}} & \qw &\gate{U_2} &\qw& \multigate{2}{\mathcal{R}} & \qw &\gate{U_1} &\qw& \multigate{2}{\mathcal{R}} & \qw &\gate{U_0} &\qw& \multigate{2}{\mathcal{R}} & \qw &\gate{U_1} &\qw& \multigate{2}{\mathcal{R}} & \qw\\
\lstick{\ket{\Psi_{2}}} & \qw & \ghost{\mathcal{R}}        & \qw &\qw      &\qw& \ghost{\mathcal{R}}        & \qw &\qw      &\qw& \ghost{\mathcal{R}}        & \qw &\qw      &\qw& \ghost{\mathcal{R}}        & \qw &\qw      &\qw& \ghost{\mathcal{R}}        & \qw &\qw      &\qw& \ghost{\mathcal{R}}        & \qw &\qw      &\qw& \ghost{\mathcal{R}}        & \qw\\
\lstick{\ket{a_1}} & \qw & \ghost{\mathcal{R}}        & \qw &\qw      &\qw& \ghost{\mathcal{R}}        & \qw &\qw      &\qw& \ghost{\mathcal{R}}        & \qw &\qw      &\qw& \ghost{\mathcal{R}}        & \qw &\qw      &\qw& \ghost{\mathcal{R}}        & \qw &\qw      &\qw& \ghost{\mathcal{R}}        & \qw &\qw      &\qw& \ghost{\mathcal{R}}        & \qw 
}$
\end{flushright}
\caption{An implementation of $T_3^\text{FPP}$ that solves every FPP for three unitaries.}
    \label{figex36}
\end{figure}
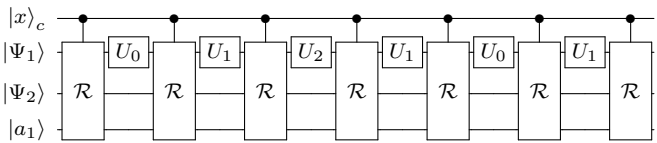

\normalsize

\begin{table}[H]
    \centering
    \begin{tabular}{l|l}
    &$T_3^\text{FPP}\left(\ket{x_{ijk}}\otimes \ket{\Psi_1}\otimes \ket{\Psi_2}\otimes \ket{a_1}\right)$\\\hline
    \underline{$U_0U_1U_2$}\textcolor{black}{$U_1$}$\overline{U_0U_1}$
    &$\ket{x_{210}}\otimes \underline{U_2U_1U_0} \ket{\Psi_{1}}\otimes\overline{U_1U_0}\ket{\Psi_{2}}\otimes U_1 \ket{a_1}$
    \\
    \underline{$U_0U_1U_2$}$\overline{U_1U_0}$\textcolor{black}{$U_1$}
    &$\ket{x_{201}}\otimes \underline{U_2U_1U_0}\ket{\Psi_{1}}\otimes \overline{U_0U_1}\ket{\Psi_{2}}\otimes U_1\ket{a_1}$
    \\
    &
    $=\ket{x_{201}}\otimes U_2U_0U_1\ket{\Psi_{1}}\otimes U_1U_0\ket{\Psi_{2}}\otimes U_1\ket{a_1}$
    \\
    \underline{$U_0$}\textcolor{black}{$U_1$}\underline{$U_2U_1$}$\overline{U_0U_1}$ 
    &$\ket{x_{120}}\otimes \underline{U_1U_2U_0} \ket{\Psi_{1}}\otimes\overline{U_1U_0}\ket{\Psi_{2}}\otimes  U_1 \ket{a_1}$
    \\
    $\overline{U_0}$\underline{$U_1$}\underline{$U_2$}\textcolor{black}{$U_1$}\underline{$U_0$}$\overline{U_1}$ 
    &$\ket{x_{021}}\otimes  \underline{U_0U_2U_1} \ket{\Psi_{1}}\otimes\overline{U_1U_0}\ket{\Psi_{2}}\otimes  U_1 \ket{a_1}$
    \\
    $\overline{U_0}\overline{U_1}$\underline{$U_2$}\textcolor{black}{$U_1$}\underline{$U_0$}\underline{$U_1$} 
    &$\ket{x_{102}}\otimes  \underline{U_1U_0U_2} \ket{\Psi_{1}}\otimes\overline{U_1U_0}\ket{\Psi_{2}}\otimes  U_1 \ket{a_1}$
    \\
    $\overline{U_0}\overline{U_1}$\underline{$U_2$}\underline{$U_1$}\underline{$U_0$}\textcolor{black}{$U_1$} 
    & $\ket{x_{012}}\otimes \underline{U_0U_1U_2} \ket{\Psi_{1}}\otimes\overline{U_1U_0}\ket{\Psi_{2}}\otimes  U_1 \ket{a_1}$
    \end{tabular}
    \caption{Final state for every $\ket{x_{ijk}}$ of the algorithm in Fig.~\ref{figex36}. Using pairwise commutation relations, one can rewrite the second line (see main text).}
    \label{tabex36}
\end{table}

The crucial difference here is the permutation $U_2U_0U_1$. It is not possible that the first target system $\ket{\Psi_1}$ ends up in the state $U_2U_0U_1\ket{\Psi_1}$ directly, since the permutation $U_1U_0U_2$ is not contained as a substring of $U_0U_1U_2U_1U_0U_1$ (remember that the order is reversed since $U_1$ has to act first, then $U_0$ and $U_2$). In this way, this algorithm is not able to simulate the (complete) 3-switch but nevertheless, it is able to solve FPPs. Since every set of unitaries that satisfies the promise satisfies pairwise commutation relations, we can use \mbox{$U_1U_0=\alpha_{10}\ U_0U_1$} to rewrite the second line of Table~\ref{tabex36} into:
\begin{align}
\begin{split}
    &\ket{x_{201}}\otimes U_2U_1U_0\ket{\Psi_{1}}\otimes U_0U_1\ket{\Psi_{2}}\otimes U_1\ket{a_1}\\
    &=\alpha_{10}\ \ket{x_{201}}\otimes U_2U_0U_1\ket{\Psi_{1}}\otimes U_0U_1\ket{\Psi_{2}}\otimes U_1\ket{a_1}\\
    &=\ket{x_{201}}\otimes U_2U_0U_1\ket{\Psi_{1}}\otimes U_1U_0\ket{\Psi_{2}}\otimes U_1\ket{a_1}\, .
\end{split}
\end{align}
Due to this, the algorithm implements a transformation that is very similar to the one that is implemented by the algorithm in the last subsection (Eq.~\eqref{simswitch3}). For every $x_{ijk}$ the permutation $U_iU_jU_k$ is applied on the first target system, while the second target system and the auxiliary system always end up in the state $U_1U_0\ket{\Psi_{2}}\otimes U_1\ket{a_1}$:
\begin{align}
\begin{split}
    &T_3^\text{FPP}\ \ket{x_{ijk}}\otimes \ket{\Psi_1}\otimes \ket{\Psi_2}\otimes \ket{a_1}\\
    &=\ket{x_{ijk}}\otimes U_iU_jU_k\ket{\Psi_1}\otimes U_1U_0\ket{\Psi_{2}}\otimes U_1\ket{a_1} \label{tfpp3}
    \, .
\end{split}
\end{align}
In this way, the combined control and first target system simulate the action of the 3-switch for unitaries satisfying pairwise commutation relations. Therefore, this algorithm solves every FPP for $n=3$ in the same way as the algorithm in the last subsection.\footnote{Alternatively, to stay within the methods developed in Subsection~\ref{newmethod}, note that Eq.~\eqref{tfpp3} can be rewritten into $\omega^{x_{ijk}\cdot y}\ \ket{x_{ijk}}\otimes \Pi_0\ket{\Psi_1}\otimes U_1U_0\ket{\Psi_{2}}\otimes U_1\ket{a_1}$ by using the promise $\Pi_{x_{ijk}}=U_iU_jU_k=\omega^{x_{ijk}\cdot y}\ \Pi_0$. In this way, the final state of each target and auxiliary system is independent of $x_{ijk}$ and the algorithm in Fig.~\ref{figgeneral} can be applied directly.}\\

In Section~\ref{secsqrt}, we use similar ideas and present an algorithm that simulates the action of the $n$-switch for unitaries that satisfy pairwise commutation relations (and is therefore able to solve every FPP) with $O(n\sqrt{n})$ queries.

\section{Causal algorithms with query complexity $O(n\log{n})$} \label{seclog}
In this section, we present an algorithm that solves the FPP with the labeling used in Appendix~A of Ref.~\cite{1} with $O(n\log{n})$ queries.
First, we recall how the permutations are labeled and derive the pairwise commutation relations for this labeling.\\

The identity permutation is defined as:
\begin{align}
    \Pi_0:=U_{n-1}U_{n-2}...U_1U_0 \, .
\end{align}
The labeling of all other permutations $\Pi_x$ is based on the factorial number system; every $x$ is represented with $n-1$ integers $(a_{n-1},...,a_1)$ where $a_k\in \{0,1,...,k\}$: \footnote{This is Eq. (A4) in Ref.~\cite{1}.}
\begin{align}
    x=\sum_{k=1}^{n-1}a_k\cdot k! \, . \label{loglabel}
\end{align}
Starting with the identity permutation \mbox{$\Pi_0=U_{n-1}...U_1U_0$}, we obtain the permutation $\Pi_x$ by shifting first $U_1$ $a_1\in\{0,1\}$ steps to the right, then $U_2$ $a_2\in\{0,1,2\}$ steps to the right and so on. The labeling for $n=3$ is given as the first example in Subsection \ref{label3} (Eq. \eqref{fac31}-\eqref{fac3}). We call this labeling of the permutations the ``factoradic'' labeling.\\

Due to Proposition~\ref{proposition1}, we can read off the commutation relations for every pair of unitaries $U_j$ and $U_k$ (w.l.o.g. we assume here $j<k$). The two permutations we have to focus on are:
\begin{align}
    \Pi_{x^1_{jk}}&=U_{n-1}...U_{1}U_{0}U_jU_k \text{ and}\\
    \Pi_{x^2_{jk}}&=U_{n-1}...U_{1}U_{0}U_kU_j \, .
\end{align}
To construct the first permutation $\Pi_{x^1_{jk}}$ from the identity permutation $\Pi_0$, the unitary $U_j$ is first shifted $j$ steps to the right. In a second step, $U_k$ is shifted $k$ steps to the right, while all other unitaries are not shifted. Hence, the label of $\Pi_{x^1_{jk}}$ is $x^1_{jk}=k\cdot k!+ j\cdot j!$. To obtain the permutation $\Pi_{x^2_{jk}}$ from the identity permutation, the unitary $U_j$ is first shifted $j$ steps to the right and afterwards, $U_k$ is shifted $(k-1)$ steps to the right. The remaining unitaries are not shifted. Therefore, the label of the second permutation is $x^2_{jk}=(k-1)\cdot k!+ j\cdot j!$. If we combine this with Eq. \eqref{pair}, we obtain:
\begin{align}
\begin{split}
    U_jU_k=\omega^{(x^1_{jk}-x^2_{jk})\cdot y}\cdot U_kU_j=\omega^{k!\cdot y}\cdot U_kU_j\label{pairfac} \, .
\end{split}
\end{align}
Hence, whenever a unitary $U_k$ of the set for which the promise holds is commuted with a unitary of a smaller index, the result remains unchanged up to a phase~$\omega^{k!\cdot y}$. \footnote{This is equivalent to Eq. (A3) of Ref. \cite{1}. The difference is that here, we derive the pairwise commutation relations from the promise and in Ref.~\cite{1}, the pairwise commutation relations are used to show that unitaries which satisfy the promise exist (an explicit example of unitaries that satisfy the promise can also be found in Appendix~A of Ref.~\cite{1}).}\\

To introduce the idea, we give the algorithm for $n=4$ in the next subsection and generalize the procedure thereafter. Note that the query complexity of this example is actually worse than with the conventional method; it requires 18 queries, while the most efficient simulation of the quantum-$4$-switch calls twelve black-box unitaries.\footnote{Following the example given in Subsection~\ref{secexample37}, the shortest string containing all permutations of the four unitaries has twelve elements, for example $U_1U_2U_3U_4U_1U_2U_3U_1U_4U_2U_1U_3$ \cite{newey}.} Nonetheless, it is an instructive example whose generalization results in a significant reduction of the query complexity. As a further remark, we want to point out that we use the notation of controlled unknown unitaries merely for convenience. Note, however, that controlling unknown unitaries is impossible within the standard quantum circuit model \cite{Ara_jo_2014} but can be realized in the interferometric type of setups \cite{cunknown1, cunknown2, cunknown3}. At the end of this section (Subsection~\ref{controlofunk}), we show that it is possible to rewrite the algorithm in a form that does not control unknown unitaries.

\subsection{The algorithm for n=4}
For our purpose, it is useful to represent every number $x\in \{0,1,...,4!-1\}$ in a basis of bits $c^x_{k,i}\in\{0,1\}$. More precisely, we identify the state $\ket{x}_c$ with a six-qubit state
\begin{align}
    \ket{x}_c= \bigotimes_{\substack{1\leq k\leq 3\\1\leq i\leq 2}}  \ket{c^x_{k,i}} \, .
\end{align}
The bits $c^x_{k,i}$ satisfy the following equation:
\begin{align}
\begin{split}
    x=&c_{3,1}^x\cdot 12+c_{3,2}^x\cdot 6\\
    &+c_{2,1}^x\cdot 2+c_{2,2}^x\cdot 2\\
    &+c_{1,1}^x\cdot 1+c_{1,2}^x\cdot 1 \, . \label{ex4basis}
\end{split}
\end{align}
For example, $x=16$ can be written as $16=12+2+2$. Hence, $\ket{x=16}_c$ is represented by $\ket{c^{16}_{3,1}=1}\otimes\ket{c^{16}_{3,2}=0}\otimes\ket{c^{16}_{2,1}=1}\otimes\ket{c^{16}_{2,2}=1}\otimes\ket{c^{16}_{1,1}=0}\otimes\ket{c^{16}_{1,2}=0}$. It is simple to check that indeed every $x\in \{0,1,...,4!-1\}$ can be represented in this way. Note that this representation is not unique and most numbers can be decomposed in more than one way. For our purpose, it is enough to choose one representation for every $x\in \{0,1,...,4!-1\}$.\\

\begin{figure}[hbt!]
\smaller[2]
\begin{center}
$\input{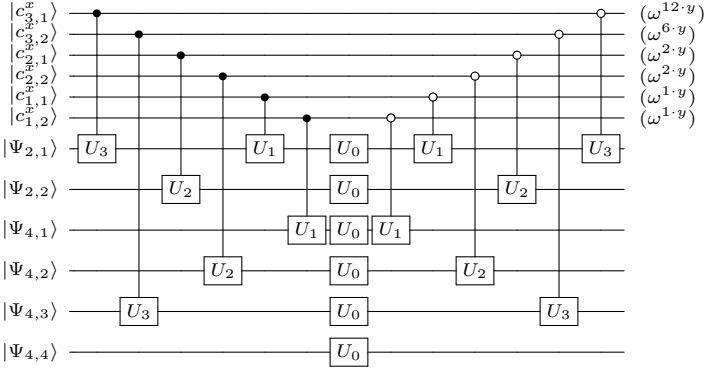}$
\end{center}
\normalsize
\caption{Implementation of the transformation $T_4^{\text{FPP}}$ for the factoradic labeling of the permutations.}
\label{figlog4}
\end{figure}

In accordance with the methods introduced in Subsection~\ref{newmethod}, we will show that the algorithm in Fig.~\ref{figlog4} can implement the transformation $T_4^{\text{FPP}}$ for this FPP. To see this, we look first at the target system $\ket{\Psi_{2,1}}$. The gates $U_3$, $U_1$ and $U_0$ act on this system but the order in which they are applied depends on the control qubits $\ket{c^x_{3,1}}$ and $\ket{c^x_{1,1}}$. If both control qubits are in the state $\ket{0}$, this target system ends up in the state $U_3U_1U_0\ket{\Psi_{2,1}}$. On the other hand, if one or both of the two control qubits are in the state $\ket{1}$, the order of these gates is different. Nevertheless, due to the pairwise commutation relations given in Eq.~\eqref{pairfac}, we can always rewrite the final state of the system $\ket{\Psi_{2,1}}$ into $U_3U_1U_0\ket{\Psi_{2,1}}$. By doing so, a phase is obtained whenever two unitaries are commuted (see Table~\ref{tablelog4}).\\

\begin{table}[h!]
\begin{tabular}{l|l|lllll}
$\ket{c^x_{3,1}}$ & $\ket{c^x_{1,1}}$ & \multicolumn{5}{c}{Final state of $\ket{\Psi_{2,1}}$} \\\hline 
$\ket{0}$       & $\ket{0}$       & $U_3U_1U_0\ket{\Psi_{2,1}}$       & $=$ &                           &                     & $U_3U_1U_0\ket{\Psi_{2,1}}$ \\
$\ket{0}$       & $\ket{1}$       & $U_3U_0U_1\ket{\Psi_{2,1}}$       & $=$ &                           & $\omega^{1\cdot y}$ & $U_3U_1U_0\ket{\Psi_{2,1}}$ \\
$\ket{1}$       & $\ket{0}$       & $U_0U_1U_3\ket{\Psi_{2,1}}$       & $=$ & $\omega^{12\cdot y}$      &                     & $U_3U_1U_0\ket{\Psi_{2,1}}$ \\
$\ket{1}$       & $\ket{1}$       & $U_1U_0U_3\ket{\Psi_{2,1}}$       & $=$ & $\omega^{12\cdot y}\cdot$ & $\omega^{1\cdot y}$ & $U_3U_1U_0\ket{\Psi_{2,1}}$
\end{tabular}
\caption{Final state of the target system $\ket{\Psi_{2,1}}$ for all combinations of the control qubits $\ket{c^x_{3,1}}$ and $\ket{c^x_{1,1}}$. }\label{tablelog4}
\end{table}

For instance, the qubit $\ket{c^x_{3,1}}$ controls whether the gate $U_3$ is applied before the two gates $U_0$ and $U_1$, or after these two gates. Whenever $\ket{c^x_{3,1}}=\ket{1}$, $U_3$ needs to be commuted with two unitaries of a smaller index (independent of the order of these two gates). In this way, a factor of $\omega^{3!\cdot y}$ is picked up twice, which multiplies together to $\omega^{12\cdot y}$. Independently of this, $\ket{c^x_{1,1}}$ controls the order of $U_1$ and $U_0$. If $\ket{c^x_{1,1}}=\ket{1}$, 
$U_1$ needs to be commuted with $U_0$ and a relative phase of $\omega^{1!\cdot y}$ is picked up.\\

Analogous arguments hold for the other systems; the control qubits control the order in which the involved gates are applied on the target system. Due to the pariwise commutation relations, the final state of each target system can always be rewritten in a form that is independent of the control system. By doing so, we pick up a phase $\omega^{(\cdot)\cdot y}$ (the expression in brackets at the end of each line of a control qubit in Fig. \ref{figlog4}), whenever a control qubit is in the state $\ket{1}$.
For example, if $x=16$, which is represented by $c^{16}_{3,1}=c^{16}_{2,1}=c^{16}_{2,2}=1$ and $c^{16}_{3,2}=c^{16}_{1,1}=c^{16}_{1,2}=0$, the circuit realizes the following transformation:
\begin{align}
\begin{split}
    &T_4^{\text{FPP}}\ket{c^{16}_{3,1}=1}\otimes \ket{c^{16}_{3,2}=0}\otimes \ket{c^{16}_{2,1}=1}\otimes \ket{c^{16}_{2,2}=1}\\ &\hspace{1.0 cm}\otimes\ket{c^{16}_{1,1}=0}\otimes \ket{c^{16}_{1,2}=0}\otimes \ket{\Psi_{2,1}}\otimes \ket{\Psi_{2,2}}\\ &\hspace{1.0 cm}\otimes\ket{\Psi_{4,1}}\otimes \ket{\Psi_{4,2}}\otimes \ket{\Psi_{4,3}}\otimes \ket{\Psi_{4,4}}\\\\
    &=\ket{c^{16}_{3,1}=1}\otimes \ket{c^{16}_{3,2}=0}\otimes \ket{c^{16}_{2,1}=1}\otimes \ket{c^{16}_{2,2}=1}\\ &\hspace{0.5 cm}\otimes\ket{c^{16}_{1,1}=0}\otimes \ket{c^{16}_{1,2}=0}\otimes U_1U_0U_3\ket{\Psi_{2,1}}\otimes U_0U_2\ket{\Psi_{2,2}}\\ &\hspace{0.5 cm}\otimes U_1U_0\ket{\Psi_{4,1}}\otimes U_0U_2\ket{\Psi_{4,2}}\otimes U_3U_0\ket{\Psi_{4,3}}\otimes \ket{\Psi_{4,4}}\\\\
    &=\omega^{16\cdot y}\ \ket{c^{16}_{3,1}=1}\otimes \ket{c^{16}_{3,2}=0}\otimes \ket{c^{16}_{2,1}=1}\otimes \ket{c^{16}_{2,2}=1}\\
    &\hspace{0.5 cm}\otimes\ket{c^{16}_{1,1}=0}\otimes \ket{c^{16}_{1,2}=0}\otimes U_3U_1U_0\ket{\Psi_{2,1}}\otimes U_2U_0\ket{\Psi_{2,2}}\\
    &\hspace{0.5 cm}\otimes U_1U_0\ket{\Psi_{4,1}}\otimes U_2U_0\ket{\Psi_{4,2}}\otimes U_3U_0\ket{\Psi_{4,3}}\otimes \ket{\Psi_{4,4}} \, .
\end{split}
\end{align}
Here, we obtain the factor of $\omega^{16\cdot y}$ as a composition of
\begin{align}
    \omega^{16\cdot y}=\omega^{12\cdot y}\cdot \omega^{2\cdot y}\cdot \omega^{2\cdot y} \, .
\end{align}
The first factor $\omega^{12\cdot y}$ stems from commuting $U_3$ with $U_1$ and $U_0$ on $\ket{\Psi_{2,1}}$, while the two factors of $\omega^{2\cdot y}$ arise from commuting $U_2$ with $U_0$ on $\ket{\Psi_{2,2}}$ and $\ket{\Psi_{4,2}}$, respectively. Since every \mbox{$x\in \{0,1,...,4!-1\}$} can be represented as in Eq.~\eqref{ex4basis}, the algorithm in Fig.~\ref{figlog4} applies the following transformation for every such $x$:
\begin{align}
\begin{split}
    &T_4^{\text{FPP}}\ket{x}_c\otimes\ket{\Psi_{2,1}}\otimes \ket{\Psi_{2,2}}\otimes\ket{\Psi_{4,1}}\\ 
    &\hspace{1.5 cm}\otimes \ket{\Psi_{4,2}}\otimes \ket{\Psi_{4,3}}\otimes \ket{\Psi_{4,4}}\\\\
    &=\omega^{x\cdot y}\ \ket{x}_c\otimes U_3U_1U_0\ket{\Psi_{2,1}}\otimes U_2U_0\ket{\Psi_{2,2}}\otimes U_1U_0\ket{\Psi_{4,1}}\\ &\hspace{1.5 cm}\otimes U_2U_0\ket{\Psi_{4,2}}\otimes U_3U_0\ket{\Psi_{4,3}}\otimes \ket{\Psi_{4,4}} \, .
\end{split}
\end{align}
Note, that the final state of the target system is independent of the control system $\ket{x}$. Hence, we can use our algorithm and the procedure introduced in Subsection~\ref{newmethod} to solve this specific FPP. More precisely, applying a Fourier transform, the control system is initialized in an equal superposition of all $\ket{x}$. After applying the algorithm, a measurement of the control system in the Fourier basis will yield the desired value of~$y$.\\

As mentioned before, the query complexity of this example is actually worse than with the simulation of the quantum-4-switch. Nevertheless, for larger $n$, the method that we introduce here solves this specific FPP with only $O(n\log{n})$ queries. The reason for this scaling advantage comes from the fact that every number $x\in\{0,1,..., n!-1\}$ can be represented with $O(n\log{n})$ bits $c^x_{k,i}$ (remember that $n!\leq 2^{n\log_2{n}}$). In the algorithms presented here, every such bit corresponds to a control qubit and only two queries couple to every control qubit.\\

As a remark, note that the target system $\ket{\Psi_{4,4}}$ is technically redundant. Moreover, the target system $\ket{\Psi_{4,1}}$ and the corresponding control qubit $\ket{c^x_{1,2}}$ are also not needed since the five bits $c_{3,1}^x$, $c_{3,2}^x$, $c_{2,1}^x$, $c_{2,2}^x$ and $c_{1,1}^x$ are sufficient to represent every number $x\in \{0,1,...,4!-1\}$ in the form given in Eq.~\eqref{ex4basis}. In this specific example for $n=4$, we kept them in for completeness. It turns out that for every $n$, there are some target systems that can be left out. In order to keep the notation as simple as possible, and since this does not affect the overall scaling, we refrain from doing so.

\begin{widetext}

\begin{figure}[h!]
\smaller[2]
\begin{subfigure}{0.95\textwidth}
\begin{center}
$\input{circbig}$\\
\subcaption{$C_k$ and $V_k$ ($\tilde{C}_k$ and $\tilde{V}_k$) are used as a shorthand notation and are defined in Fig.~\ref{subfigureb} (Fig.~\ref{subfigurec}).}
\end{center}
\end{subfigure}
\begin{subfigure}{0.45\textwidth}
\begin{flushright}
$\input{circV}$
\end{flushright}
\subcaption{Definition of $C_k$ and $V_k$\\ (The black dot denotes a control on $\ket{1}$, in symbols: $\ketbra{0}{0}\otimes \mathds{1}+\ketbra{1}{1}\otimes U_k$.)}\label{subfigureb}
\end{subfigure}\hfill
\begin{subfigure}{0.45\textwidth}
\begin{flushright}
$\input{circVtilde}$
\end{flushright}
\subcaption{Definition of $\tilde{C}_k$ and $\tilde{V}_k$\\(The white dot denotes a control on $\ket{0}$, in symbols: $\ketbra{0}{0}\otimes U_k+\ketbra{1}{1}\otimes \mathds{1}$.)}\label{subfigurec}
\end{subfigure}\\
\normalsize
    \caption{The quantum circuit implementing the transformation $T_n^{\text{FPP}}$ for the FPP with the factoradic labeling of the permutations: Depending on the state of the control qubits, the gates are applied on the target systems in a certain order. Due to the pairwise commutation relations, the final state of each target system can always be reordered but certain phases are picked up when two unitaries are commuted. For every $x\in \{0,1,...,n!-1\}$, these phases multiply together to $\omega^{x\cdot y}$ (see main text).
    }
    \label{figlogngen}
\end{figure}

\end{widetext}

\subsection{The algorithm for every $n$}

In this section, we show how the idea of the above example can be generalized to solve this specific FPP for arbitrary $n$. As above, it is convenient to introduce a specific representation of the control state $\ket{x}$ into qubits. More precisely, we use $(n-1)\cdot \ceil*{\log_2{n}}$ control qubits $\ket{c^x_{k,i}}$ ($c^x_{k,i}\in\{0,1\}$) where $k=1,2,...,n-1$ and $i=1,2,...,\ceil*{\log_2{n}}$. For convenience, we define $\hat{i}:= \ceil*{\log_2{n}}$. The state $\ket{x}_c$ is identified with
\begin{align}
    \ket{x}_c= \bigotimes_{\substack{1\leq k\leq n-1\\1\leq i\leq \ceil*{\log_2{n}}}}  \ket{c_{k,i}^x} \,  ,
\end{align}
where the bits $c_{k,i}^x$ satisfy the equation
\begin{align}
    x=\sum_{k=1}^{n-1}\sum_{i=1}^{\hat{i}}\ c_{k,i}^x\cdot \ceil*{\frac{k}{2^i}}\cdot k! \, .
    \label{equlog}
\end{align}
While the motivation for this basis will become clearer below, we give the formal proof that every $x\in\{0,1,...,n!-1\}$ can be represented in this way in Appendix \ref{appedixA}. \footnote{Note that the representation of $x$ in this basis is not unique. For our purpose, it is enough to pick one such representation for every $x$. Furthermore, this representation is related to the factorial number system that is used to label the permutations $\Pi_x$ in Eq.~\eqref{loglabel}. More precisely, $c_{k,i}^x\cdot \ceil*{\frac{k}{2^i}}$ is a representation of $a_k$ into $\ceil*{\log_2{n}}$ bits.}\\

Now, we will show that the quantum circuit given in Fig.~\ref{figlogngen} solves this specific FPP with $O(n\log{n})$ queries. The control system consists of the \mbox{$(n-1)\cdot \ceil*{\log_2{n}}$} control qubits $\ket{c^x_{k,i}}$ introduced above. Furthermore, we use several target systems $\ket{\Psi_{2^i,j}}$, where \mbox{$i\in \{1,2,...,\hat{i}:= \ceil*{\log_2{n}}\}$} and \mbox{$j\in \{1,2,3,...,2^i\}$}. As for every algorithm in this article, they are initialized in an arbitrary $d$-dimensional state.\\

The idea of the algorithm is, as usual, that the black-box gates act on the target systems in a certain order and due to the pairwise commutation relations certain phases are picked up by rewriting the final state of each target system. More precisely, we will show that whenever a control qubit $\ket{c^x_{k,i}}$ is in the state $\ket{1}$, we obtain a relative phase of $\omega^{\ceil*{\frac{k}{2^i}}\cdot k!\cdot y}$, independent of the states of the other control qubits. For every $x\in \{0,1,...,n!-1\}$ these phases multiply together to~$\omega^{x\cdot y}$:
\begin{widetext}

\begin{align}
\begin{split}
    T_n^{\text{FPP}}\ket{x}_c\otimes\left( \bigotimes_{\substack{1\leq i\leq \hat{i}\\ 1\leq j\leq 2^i}}  \ket{\Psi_{2^i,j}}\right)&=T_n^{\text{FPP}} \left(\bigotimes_{\substack{1\leq k\leq n-1\\1\leq i\leq \hat{i}}}  \ket{c_{k,i}^x}\right)\otimes\left( \bigotimes_{\substack{1\leq i\leq \hat{i}\\ 1\leq j\leq 2^i}}  \ket{\Psi_{2^i,j}}\right)\\
    &=\left(\prod_{\substack{1\leq k\leq n-1\\1\leq i\leq \hat{i}}}\omega^{c_{k,i}^x\cdot\ceil*{\frac{k}{2^i}}\cdot k!\cdot y }\right)\cdot \left(\bigotimes_{\substack{1\leq k\leq n-1\\1\leq i\leq \hat{i}}} \ket{c_{k,i}^x}\right) \otimes\left( \bigotimes_{\substack{1\leq i\leq \hat{i}\\ 1\leq j\leq 2^i}}  (...U_{2^i+j}U_jU_0)\ket{\Psi_{2^i,j}}\right)\\
    &= \omega^{x \cdot y}\ \ket{x}_c\otimes\left( \bigotimes_{\substack{1\leq i\leq \hat{i}\\ 1\leq j\leq 2^i}} (...U_{2^i+j}U_jU_0)\ket{\Psi_{2^i,j}}\right) \, . \label{translogn}
\end{split}
\end{align}
\end{widetext}
Here, we used the representation of $x$ in the basis given in Eq.~\eqref{equlog} to show that all the phases accumulate to:
\begin{align}
\begin{split}
     \prod_{\substack{1\leq k\leq n-1\\1\leq i\leq \hat{i}}}\omega^{c_{k,i}^x\cdot\ceil*{\frac{k}{2^i}}\cdot k!\cdot y }&=\omega^{\left(\sum\limits_{k=1}^{n-1}\sum\limits_{i=1}^{\hat{i}}\ c_{k,i}^x\cdot \ceil*{\frac{k}{2^i}}\cdot k!\cdot y \right)}=\omega^{x\cdot y} \, .
\end{split}
\end{align}
One can observe that the target system becomes independent of the control system. Hence, if the control system is initialized in an equal superposition of all $x$, the circuit applies, by linearity, the transformation
\begin{align}
\begin{split}
    &\left(\frac{1}{\sqrt{n!}}\sum_{x=0}^{n!-1}\ket{x}_c\right)\otimes\left( \bigotimes_{\substack{1\leq i\leq \hat{i}\\ 1\leq j\leq 2^i}}  \ket{\Psi_{2^i,j}}\right)\\
    &\mapsto\left(\frac{1}{\sqrt{n!}}\sum_{x=0}^{n!-1}\omega^{x \cdot y}\ket{x}_c\right)\otimes\left( \bigotimes_{\substack{1\leq i\leq \hat{i}\\ 1\leq j\leq 2^i}} (...U_{2^i+j}U_jU_0)\ket{\Psi_{2^i,j}}\right) \label{eqlogtrans}
\end{split}
\end{align}
and all target systems factorize out at the end of the algorithm. After applying the inverse Fourier transform to the control system, the correct value of $y$ can be read out with a measurement of the control system in the computational basis. For a better understanding of the algorithm, in addition to the example of $n=4$ in the last subsection, we give the circuit for $n=8$ in Appendix~\ref{appedixC}.

\subsubsection{How the algorithm works}
To show that this algorithm realizes the desired transformation in Eq.~\eqref{translogn}, we focus on one target system $\ket{\Psi_{2^i,j}}$ and all the gates that act on it. These are exactly the gates $U_k$ that satisfy $k\equiv j \Mod{2^i}$. The order in which these gates are applied on the target system $\ket{\Psi_{2^i,j}}$ depends on the states of the corresponding control qubits $\ket{c^x_{k,i}}$ (see Fig.~\ref{figtargetsys}).

\begin{figure}[h!]
\smaller[2]
\begin{flushright}
$\input{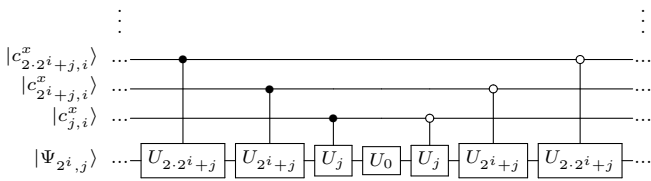}$
\end{flushright}
\normalsize
    \caption{Each control qubit $\ket{c^x_{k,i}}$ controls whether the gate $U_k$ (with $k\equiv j \Mod{2^i}$) is applied either before or after an entire block of matrices with a smaller index. For example, if $\ket{c^x_{2^i+j,i}=1}$ the gate $U_{2^i+j}$ is applied before $U_0$ and $U_j$, while if $\ket{c^x_{2^i+j,i}=0}$ it is applied after these two gates. Rewriting the final  state leads to a factor of $\omega^{2\cdot (2^i+j)!\cdot y}$ whenever $\ket{c^x_{2^i+j,i}=1}$.}
    \label{figtargetsys}
\end{figure}

On the one hand, if all control qubits are in the state $\ket{0}$, $U_0$ acts first, then $U_j$ and so on. This way, the final state of the target system becomes $(...U_{2\cdot 2^i+j}U_{2^i+j}U_jU_0)\ket{\Psi_{2^i,j}}$. On the other hand, if some of the control qubits are in the state $\ket{1}$, the gates act in a different order. The structure of the algorithm is chosen such that a gate is either applied immediately before or after an entire block of unitaries with a smaller index. By rewriting the final state into the form $(...U_{2\cdot 2^i+j}U_{2^i+j}U_jU_0)\ket{\Psi_{2^i,j}}$, a unitary has to be commuted either with all unitaries within this block or with none of them. One can check that every gate $U_k$ that appears on the target system has to be commuted with $\ceil*{\frac{k}{2^i}}$ unitaries of smaller index if and only if $\ket{c^x_{k,i}=1}$, leading to an additional factor of $\omega^{\ceil*{\frac{k}{2^i}}\cdot k!\cdot y}$. This shows that the algorithm realizes the transformation given in Eq.~\eqref{translogn}.\\


\subsubsection{Query complexity}
To count the total number of black-box unitaries that are used in this algorithm, we observe that for every $k\in \{1,2, ..., n-1\}$, the gate $U_k$ appears only in $V_k$ and $\tilde{V}_k$. In each of them, it is used exactly $\hat{i}=\ceil*{\log_2{n}}$ times. In addition, the gate $U_0$ acts on each target system exactly once. In total, there are
\begin{align}
    2+4+8+...+2^{\ceil*{\log_2{n}}}=2^{\ceil*{\log_2{n}}+1}-2
\end{align}
target systems. To see this, remember that the target systems are labeled with $\ket{\Psi_{2^i, j}}$, where $i=1,2,...,\ceil*{\log_2{n}}$ and $j=1,2,3,...,2^i$). Hence, the total number of black-box calls is
\begin{align}
\begin{split}
    Q&= 2\cdot (n-1)\cdot \ceil*{\log_2{n}} + 2^{\ceil*{\log_2{n}}+1}-2\\
    &<2\cdot (n-1)\cdot  (\log_2{n}+1) +2^{\log_2{n}+2}-2\\
    &=2\cdot (n-1)\cdot  (\log_2{n}+1) +4\cdot n-2\, .\\
\end{split}
\end{align}
We conclude that the query complexity of this algorithm is $O(n\log{n})$.

\subsubsection{Control of unknown unitaries}\label{controlofunk}
In this circuit, we control unknown unitaries. This operation is not well-defined within the standard quantum circuit model \cite{Ara_jo_2014}. Nevertheless, one can circumvent this issue by introducing auxiliary systems. More precisely, for every $k\in\{1,2,...,n-1\}$, we add an auxiliary system $\ket{a_k}$ initialized in an arbitrary $d$-dimensional state. Whenever an unknown unitary $U_k$, controlled on $\ket{c^x_{k,i}}$, shall be applied on $\ket{\Psi_{2^i, k \Mod{2^i}}}$, we perform instead a controlled swap of $\ket{\Psi_{2^i, k \Mod{2^i}}}$ and $\ket{a_k}$:

\smaller[1]
\begin{flushright}
$\Qcircuit @C=1em @R=1em {
\lstick{\ket{c^x_{k,  i}}}              & \ctrl{2}   & \qw &             &  &  &  &  &  &  &  &  &  & \lstick{\ket{c^x_{k,  i}}}              & \ctrl{7} & \qw        & \ctrl{7} & \qw \\
                                      &            &     &             &  &  &  &  &  &  &  &  &  &                                       &          &            &          &     \\
\lstick{\ket{\Psi_{2^i,k \Mod{2^i}}}} & \gate{U_k} & \qw & \Rightarrow &  &  &  &  &  &  &  &  &  & \lstick{\ket{\Psi_{2^i,k \Mod{2^i}}}} & \qswap   & \qw        & \qswap   & \qw \\
                                      &            &     &             &  &  &  &  &  &  &  &  &  &                                       &          &            &          &     \\
                                      &            &     &             &  &  &  &  &  &  &  &  &  & \lstick{\ket{a_{1}}}                  & \qw      & \qw        & \qw      & \qw \\
                                      &            &     &             &  &  &  &  &  &  &  &  &  &                        &          &        \vdots    &          &     \\
                                      &            &     &             &  &  &  &  &  &  &  &  &  &                                       &          &            &          &     \\
                                      &            &     &             &  &  &  &  &  &  &  &  &  & \lstick{\ket{a_{k}}}                  & \qswap   & \gate{U_k} & \qswap   & \qw \\
                                      &            &     &             &  &  &  &  &  &  &  &  &  &                       &          &         \vdots   &          &     \\
                                      &            &     &             &  &  &  &  &  &  &  &  &  &                                       &          &            &          &     \\
                                      &            &     &             &  &  &  &  &  &  &  &  &  & \lstick{\ket{a_{n-1}}}                & \qw      & \qw        & \qw      & \qw
}$
\end{flushright}
\normalsize

If the control qubit is in the state $\ket{c^x_{k,i}=1}$, the two systems are swapped and the gate $U_k$ is applied on the target system. On the other hand, if the control qubit is in the state $\ket{c^x_{k,i}=0}$, the two systems are not swapped and the gate $U_k$ is applied on the auxiliary system $\ket{a_k}$ instead. In the case where the gates are applied conditioned on $\ket{c^x_{k,i}=0}$ (for the boxes $\tilde{V}$), the swaps are also conditioned on $\ket{c^x_{k,i}=0}$ (all black dots are replaced by white dots).\\

It is important to ensure that this replacement does not affect the functionality of our algorithm. To see that this is true, note that for every $i\in\{1,2,...,\hat{i}\}$, the gate $U_k$ acts exactly once on the target system $\ket{\Psi_{2^i, k \Mod{2^i}}}$ and once on the auxiliary system $\ket{a_k}$, independent of the state of the control qubit $\ket{c^x_{k,i}}$. (The control qubit controls only if it is first applied on the target system and thereafter on the auxiliary system or vice versa.) In total, each auxiliary system ends up in the state $(U_k)^{\hat{i}}\ket{a_k}$, independent of the state of the control system. In this way, the auxiliary systems factorize out when the control system is initialized in a superposition of all states $x\in\{0,1,...,n!-1\}$ and do not affect the outcome of the measurement of the control system at the very end of the algorithm.

\subsection{$O(n\log{n})$ causal algorithms for every FPP?}
A natural question that appears is whether it is possible to solve other FPPs with $O(n\log{n})$ queries as well. The algorithms that we presented in this section can only be used for this specific labeling of the permutations, since we explicitly use the relations $U_jU_k=\omega^{k!\cdot y}\ U_kU_j$. If the permutations are labeled differently, the pairwise phases will change and the above algorithm cannot be used directly. Nevertheless, we think that the structure of our algorithm can be adopted to solve other FPPs as well. The idea that a certain phase $\omega^{\phi(k,i)\cdot y}$ is picked up whenever a control qubit $\ket{c^x_{k,i}}$ is in the state $\ket{1}$ (by using a structure as in Fig.~\ref{figtargetsys}) can be used for different pairwise commutation relations as well. If every $x\in\{0,1,..., n!-1\}$ can be written as
\begin{align}
    x=\sum_{k, i} c_{k,i}^x\cdot \phi(k,i) \label{xrep}
\end{align}
for some bits $c^x_{k,i}\in\{0,1\}$, we can use the control qubits $\ket{c^x_{k,i}}$ as the control system $\ket{x}_c$:
\begin{align}
    \ket{x}_c= \bigotimes_{k, i} \ket{c_{k,i}^x} \, .
\end{align}
By initializing the control system in an equal superposition of all $x\in\{0,1,..., n!-1\}$, such an algorithm will apply the transformation
\begin{align}
\begin{split}
    &\left(\frac{1}{\sqrt{n!}}\sum_{x=0}^{n!-1}\ket{x}_c\right)\otimes \bigotimes^{m}_{i=1}\ket{\Psi_i}\\
    &\mapsto \left(\frac{1}{\sqrt{n!}}\sum_{x=0}^{n!-1}\omega^{x\cdot y}\ket{x}_c\right)\otimes\bigotimes^{m}_{i=1}f^i_0(U_0,...,U_{n-1})\ket{\Psi_i} \, ,
    \end{split}
\end{align}
where for every $x$, the phase $\omega^{x\cdot y}$ is obtained as the product of:
\begin{align}
    \omega^{x\cdot y}=\prod_{k, i} \omega^{c_{k,i}^x\cdot \phi(k,i)\cdot y} \, .
\end{align}
Here, the equality follows from Eq.~\eqref{xrep}. Again, the solution $y$ can be read out after applying the inverse Fourier transform to the control system.\\

Since every number $x\in\{0,1,..., n!-1\}$ can in principle be represented with $O(n\log{n})$ bits ($n!\leq 2^{n\log_2{n}}$) and since two queries couple to every control qubit, it seems likely that every FPP can be solved with $O(n\log{n})$ queries. The crucial point is whether it is possible to find an implementation as in Fig.~\ref{figtargetsys}, a combination of gates and target systems such that this can be done efficiently. The disadvantage of this procedure is that it requires some rather involved combinatorics and that one has to adapt this algorithm by hand. While it remains open whether this is always possible, we present in the next section an algorithm that can solve \textit{every} FPP with $O(n\sqrt{n})$ queries, independent of the labeling of the permutations.

\section{A causal algorithm that solves every FPP with $O(n\sqrt{n})$ queries} \label{secsqrt}

In this section, we present an algorithm that solves every Fourier promise problem with $O(n\sqrt{n})$ queries.
The main idea is based on the fact that the existence of pairwise commutation relations \mbox{($U_jU_k=\alpha_{jk}\ U_kU_j$)} allows us to rewrite every permutation \mbox{$\Pi_x=U_{\sigma_x(n-1)}...U_{\sigma_x(1)}U_{\sigma_x(0)}$} into:
\begin{align}
    U_{\sigma_x(n-1)}...U_{\sigma_x(1)}U_{\sigma_x(0)}=\alpha_x\ U_{n-1}...U_{1}U_{0} \, ,
\end{align}
where the total phase $\alpha_x$ is a product of pairwise phases $\alpha_{jk}$. We use this fact to decompose the total phase of every permutation into different factors and simulate each factor on a different target system. So instead of simulating every permutation $\Pi_x$ on its own (which requires the simulation of the (full) $n$-switch and hence $O(n^2)$ queries), we construct other expressions that can simulate these factors and call only $O(n\sqrt{n})$ gates in total. Via the ``phase-kickback,'' all these factors accumulate in the control system and multiply together to the total phase~$\alpha_x$.\\

More precisely, we decompose every permutation \mbox{$\Pi_x=U_{\sigma_x(n-1)}...U_{\sigma_x(1)}U_{\sigma_x(0)}$} into blocks of length \mbox{$\hat{n}:=\ceil*{\sqrt{n}}$} (the last block contains all remaining unitaries and is usually shorter). The number of blocks obtained in this way is \mbox{$\hat{k}:=\ceil*{\frac{n}{\hat{n}}}$}. Formally, we define:

\begin{widetext}
\begin{definition}
For every permutation \mbox{$\Pi_x=(U_{\sigma_x(n-1)}...U_{\sigma_x((\hat{k}-1)\cdot \hat{n})})...\ 
    (U_{\sigma_x(2\cdot \hat{n}-1)}...U_{\sigma_x(\hat{n})})\ 
    (U_{\sigma_x(\hat{n}-1)}...U_{\sigma_x(0)})$} (the organization into blocks is merely for illustrative reasons) of the $n$ unitaries, let
\begin{align}
\Pi_{xk}&:=
\begin{cases}
    [U_{\sigma_x(n-1)}...U_{\sigma_x(\hat{n})}]\ 
    (U_{\sigma_x(\hat{n}-1)}...U_{\sigma_x(0)})\\
    [U_{\sigma_x(n-1)}...U_{\sigma_x((k+1)\cdot \hat{n})}]\ 
    (U_{\sigma_x((k+1)\cdot \hat{n}-1)}...U_{\sigma_x(k\cdot \hat{n})})\ 
    [U_{\sigma_x(k\cdot \hat{n}-1)}...U_{\sigma_x(0)}]\\
    (U_{\sigma_x(n-1)}...U_{\sigma_x((\hat{k}-1)\cdot \hat{n})})\ 
    [U_{\sigma_x((\hat{k}-1)\cdot \hat{n}-1)}...U_{\sigma_x(0)}]
\end{cases}
&
\begin{array}{l}
    k=0\\
    k=1,...,\hat{k}-2\\
    k=\hat{k}-1
\end{array} \label{pixk}
\\
\tilde{\Pi}^r_{xk}&:=
\begin{cases}
    \{U_{\sigma_x(k\cdot \hat{n}-1)}...U_{\sigma_x(0)}\}\ 
    \{U_{\sigma_x(n-1)}...U_{\sigma_x(k\cdot \hat{n})}\}\\
    \{U_{\sigma_x((\hat{k}-1)\cdot \hat{n}-1)}...U_{\sigma_x(0)}\}\ 
    \{U_{\sigma_x(n-1)}...U_{\sigma_x((\hat{k}-1)\cdot \hat{n})}\}
\end{cases}
&
\begin{array}{l}
    k=1,...,\hat{k}-2\\
    k=\hat{k}-1
\end{array} \label{pixktilde}
\end{align}
where $[U_{i_1}U_{i_2}...U_{i_j}]$ is defined to be the descending ordering of the unitaries $U_{i_1}U_{i_2}...U_{i_j}$, while $\{U_{i_1}U_{i_2}...U_{i_j}\}$ is the ascending ordering of them and $(U_{i_1}U_{i_2}...U_{i_j})=U_{i_1}U_{i_2}...U_{i_j}$ leaves the string invariant.

\end{definition}
\end{widetext}

As an example, consider \mbox{$n=9$} and \mbox{$\Pi_x=(U_3U_5U_8)\ (U_0U_2U_7)\ (U_4U_6U_1)$}:
\begin{align}
\begin{split}
    \Pi_{x0}&=(U_8U_7U_5U_3U_2U_0)\ (U_4U_6U_1) \, ,\\
    \Pi_{x1}&=(U_8U_5U_3)\ (U_0U_2U_7)\ (U_6U_4U_1) \, ,\\
    \Pi_{x2}&=(U_3U_5U_8)\ (U_7U_6U_4U_2U_1U_0) \, ,\\\\    \tilde{\Pi}^r_{x1}&=(U_1U_4U_6)\ (U_0U_2U_3U_5U_7U_8) \, \text{ and}\\
    \tilde{\Pi}^r_{x2}&=(U_0U_1U_2U_4U_6U_7)\ (U_3U_5U_8) \, .
\end{split}
\end{align}
They are defined in a way that $\Pi_{xk}$ and $\tilde{\Pi}^r_{xk}$ simulate exactly all pairwise phases $\alpha_{jk}$ for the unitaries within the block $(U_{\sigma_x((k+1)\cdot \hat{n}-1)}...U_{\sigma_x(k\cdot \hat{n})})$. If they act on different target systems $\ket{\Psi_k}$ and $\ket{\Phi_k}$ respectively, all the pairwise phases are accumulated and we obtain as the product the total phase of the original permutation $\Pi_x$:
\begin{lemma}\label{lemmasqrt}
For every set of ($d$-dimensional) unitaries $\{U_i\}_0^{n-1}$ that satisfy pairwise commutation relations, the following relation holds:
\begin{align}
\begin{split}
    &\bigotimes^{\hat{k}-1}_{k=0}\Pi_{xk}\ket{\Psi_k}\otimes \bigotimes^{\hat{k}-1}_{k=1}\tilde{\Pi}^r_{xk}\ket{\Phi_k}\\
    &\hspace{1.5cm}=\Pi_x\ket{\Psi_0}\otimes \bigotimes^{\hat{k}-1}_{k=1}\Pi\ket{\Psi_k}\otimes \Pi^r\ket{\Phi_k} \, .
    \label{toproofsqrtmain}
\end{split}
\end{align}
Here, $\Pi:=U_{n-1}...U_1U_0$ denotes the descending order of all unitaries, $\Pi^r:=U_{0}U_1...U_{n-1}$ denotes the ascending order of all unitaries and $\ket{\Psi_k},\ket{\Phi_k}\in \mathcal{H}^d$ are arbitrary $d$-dimensional states.
\end{lemma}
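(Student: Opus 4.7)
The plan is to reduce this tensor-product identity to a scalar identity about phases. By the pairwise commutation relations, any ordering of the $n$ unitaries can be rewritten as a phase times the canonical descending product $\Pi=U_{n-1}\cdots U_0$: explicitly, if $O$ denotes such an ordering then
\[
O = \Bigl(\prod_{j<k}\alpha_{jk}^{\,N_{jk}(O)}\Bigr)\cdot\Pi,
\]
where $N_{jk}(O)\in\{0,1\}$ indicates whether $U_j$ is written to the left of $U_k$ in $O$ (the indicator of the inversion of the pair $(j,k)$ with $j<k$). Each of $\Pi_{xk'}$, $\tilde{\Pi}^r_{xk'}$, $\Pi_x$, $\Pi$, $\Pi^r$ is such an ordering. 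Hence both sides of Eq.~\eqref{toproofsqrtmain} are equal to the common ``canonical'' vector $\bigotimes_{k=0}^{\hat{k}-1}\Pi\ket{\Psi_k}\otimes\bigotimes_{k=1}^{\hat{k}-1}\Pi\ket{\Phi_k}$ multiplied by a scalar, and since $N_{jk}(\Pi)=0$ and $N_{jk}(\Pi^r)=1$ for all $j<k$, the lemma reduces to proving, for every pair $j<k$, the combinatorial identity
\[
\sum_{k'=0}^{\hat{k}-1}N_{jk}(\Pi_{xk'}) + \sum_{k'=1}^{\hat{k}-1}N_{jk}(\tilde{\Pi}^r_{xk'}) = N_{jk}(\Pi_x) + (\hat{k}-1).
\]

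To establish this, I would fix a pair $j<k$ and let $b_j,b_k\in\{0,\ldots,\hat{k}-1\}$ denote the block indices in $\Pi_x$ of $U_j$ and $U_k$, where block $0$ is the rightmost $\hat{n}$-block and block $\hat{k}-1$ the leftmost. The analysis splits into three cases. Case (i) $b_j=b_k$: only $k'=b_j$ keeps both unitaries together in the unchanged middle part of $\Pi_{xk'}$ (reproducing $N_{jk}(\Pi_x)$), and every other $k'$ places them in a descending-sorted high or low part (contributing $0$); for $\tilde{\Pi}^r_{xk'}$, both unitaries always end up in the same ascending-sorted group, giving $\hat{k}-1$ in total, so the RHS value $N_{jk}(\Pi_x)+(\hat{k}-1)$ is matched. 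Case (ii) $b_j<b_k$: here $U_k$ lies in a higher (leftmost) block than $U_j$ in $\Pi_x$, so $N_{jk}(\Pi_x)=0$; direct inspection shows every $\Pi_{xk'}$ places $U_k$ to the left of $U_j$ (contributing $0$), while every $\tilde{\Pi}^r_{xk'}$ places $U_j$ to the left of $U_k$ (contributing $\hat{k}-1$), again matching the RHS.

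Case (iii) $b_j>b_k$ is the main technical step. Here $N_{jk}(\Pi_x)=1$, and the reorderings either preserve or flip the pair depending on where $k'$ lies relative to $b_j$ and $b_k$. I would argue that $N_{jk}(\Pi_{xk'})=1$ precisely when $k'\in[b_k,b_j]$: at $k'=b_k$ the unitary $U_k$ sits in the middle and $U_j$ in the adjacent high section; at $k'=b_j$ the roles swap with $U_j$ in the middle and $U_k$ in the low section; for $b_k<k'<b_j$ the unitary $U_j$ lies in the high part and $U_k$ in the low part, and ``high before low'' keeps $U_j$ to the left; this yields $b_j-b_k+1$ contributions of $1$. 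Symmetrically, $N_{jk}(\tilde{\Pi}^r_{xk'})=0$ precisely when $k'\in(b_k,b_j]$, since then $U_k$ sits in the first ascending group and $U_j$ in the second; this gives $\hat{k}-1-(b_j-b_k)$ remaining contributions of $1$. Summing yields $(b_j-b_k+1)+(\hat{k}-1-(b_j-b_k))=\hat{k}$, matching $N_{jk}(\Pi_x)+(\hat{k}-1)=1+(\hat{k}-1)$.

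The most error-prone part is the bookkeeping in case (iii), together with the edge cases $k'=0$ and $k'=\hat{k}-1$, where one of the sorted high/low sections is absent and the unitary that would have gone there is instead absorbed into the unchanged middle block; in these boundary situations one has to check by hand that the inversion count is still the same as in the interior formula. Once the pair-by-pair identity is verified in all three cases, the accumulated phases on both sides of Eq.~\eqref{toproofsqrtmain} coincide and the lemma follows.
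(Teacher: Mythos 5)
Your proof is correct, and it reaches the result by a genuinely different organization of the bookkeeping than the paper's Appendix~B. Both arguments rest on the same elementary fact that, under pairwise commutation relations, the phase of any ordering relative to $\Pi$ is $\prod_{j<k}\alpha_{jk}^{N_{jk}}$ with $N_{jk}$ the inversion indicator. The paper, however, never isolates a per-pair identity: it factors each $\Pi_{xk'}$ through the intermediate fully-sorted permutation $\tilde{\Pi}_{xk'}$, observes that the extracted phases $\alpha_{xk'}$ partition the inversions of $\Pi_x$ according to which block contains the right-hand member of each inverted pair, and then disposes of all residual phases in one stroke by noting that $\tilde{\Pi}_{xk'}$ and $\tilde{\Pi}^r_{xk'}$ are reversals of one another, so their phases relative to $\Pi$ and $\Pi^r$ are exact inverses. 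You instead collapse everything into the single combinatorial identity $\sum_{k'}N_{jk}(\Pi_{xk'})+\sum_{k'}N_{jk}(\tilde{\Pi}^r_{xk'})=N_{jk}(\Pi_x)+(\hat{k}-1)$ and verify it by a three-case analysis on the block positions $b_j,b_k$; your tallies ($b_j-b_k+1$ inverted instances among the $\Pi_{xk'}$ and $(\hat{k}-1)-(b_j-b_k)$ among the $\tilde{\Pi}^r_{xk'}$ when $b_j>b_k$, and the analogous counts in the other two cases) all check out, including at $k'=0$ and $k'=\hat{k}-1$, where the high or low sorted section is simply empty --- nothing is ``absorbed into the middle block,'' there are just no blocks on that side, so the interior formula applies verbatim. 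What the paper's route buys is that the cancellation between the two families of sorted permutations is handled wholesale by the reversal duality; what your route buys is a fully explicit, mechanically checkable integer identity per pair, with no intermediate objects $\tilde{\Pi}_{xk'}$. Either proof is sound.
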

\begin{proof}
See Appendix~\ref{appedixB}.
\end{proof}
It turns out that the permutations $\Pi_{xk}$ and $\tilde{\Pi}^r_{xk}$, due to the fact that a large part of each of them is already ordered, can be simulated with a causal algorithm and $O(n\sqrt{n})$ queries. The algorithm that achieves this is presented in the next subsection.\\

In the algorithm, $(2\cdot\hat{k}-1)$ target systems, denoted as $\ket{\Psi_k}$ and $\ket{\Phi_k}$, as well as $n$ auxiliary systems $\ket{a_i}$ are used. All of them are initialized in an arbitrary $d$-dimensional state. The control system is a system of at least $n!$ dimensions and the algorithm applies, depending on the state $\ket{x}$ of the control system, the permutations $\Pi_{xk}$ on $\ket{\Psi_k}$ and the permutations $\tilde{\Pi}^r_{xk}$ on $\ket{\Phi_k}$. All the remaining gates $U_i$ act on the corresponding auxiliary system $\ket{a_i}$. In this way, the algorithm realizes the transformation
\begin{align}
\begin{split}
    &T_n^{\text{FPP}}\ket{x}_c\otimes\bigotimes^{\hat{k}-1}_{k=0}\ket{\Psi_k}\otimes \bigotimes^{\hat{k}-1}_{k=1}\ket{\Phi_k}\otimes\bigotimes^{n-1}_{i=0}\ket{a_i}\\
    &=\ket{x}_c\otimes\bigotimes^{\hat{k}-1}_{k=0}\Pi_{xk}\ket{\Psi_k}\otimes \bigotimes^{\hat{k}-1}_{k=1}\tilde{\Pi}^r_{xk}\ket{\Phi_k}\otimes\bigotimes^{n-1}_{i=0}(U_i)^{k_i}\ket{a_i}\\
    &=\ket{x}_c\otimes\Pi_x\ket{\Psi_0}\otimes \bigotimes^{\hat{k}-1}_{k=1}\Pi\ket{\Psi_k}\otimes \Pi^r\ket{\Phi_k}\otimes\bigotimes^{n-1}_{i=0}(U_i)^{k_i}\ket{a_i} \, .    \label{transsqrt}
\end{split}
\end{align}
Here, $k_i=\hat{n}+2\cdot \hat{k}-3$ is a constant that only depends on $n$ and Lemma~\ref{lemmasqrt} is used to rewrite the state in the second step. Except for the first target system $\ket{\Psi_0}$, which ends up in the state $\Pi_x\ket{\Psi_0}$, the final state of each target and auxiliary system is independent of $x$ and we conclude that this algorithm simulates the action of the $n$-switch for unitaries satisfying pairwise commutation relations and is therefore able to solve every Fourier promise problem. More precisely, as described in Subsection~\ref{newmethod}, with a Fourier transform, the control system is initialized in an equal superposition of all states $x\in\{0,1,...,n!-1\}$ and after the algorithm is applied, the solution~$y$ can be read out with a measurement in the Fourier basis.\footnote{For the comparison with Eq.~\eqref{deftfpp}, note that by using the promise $\Pi_{x}=\omega^{x\cdot y}\ \Pi_0$, the term $\ket{x}_c\otimes\Pi_x\ket{\Psi_0}$ in Eq.~\eqref{transsqrt} can be rewritten into $\omega^{x\cdot y}\ \ket{x}_c\otimes\Pi_0\ket{\Psi_0}$.} Note that at no point we refer to a specific labeling of the permutations $\Pi_x$ and hence, we can solve every possible Fourier promise problem with this causal algorithm. The advantage stems merely from the fact that every set of unitaries that satisfies the promise also satisfies pairwise commutation relations which imply that Lemma~\ref{lemmasqrt} holds.

\newcommand{\namepart}{X_\text{Part 2}}
\newcommand{\nzeilen}{16}

\begin{widetext}
    \begin{flushright}
    \begin{figure}[h!]
    \smaller[1]
    $\input{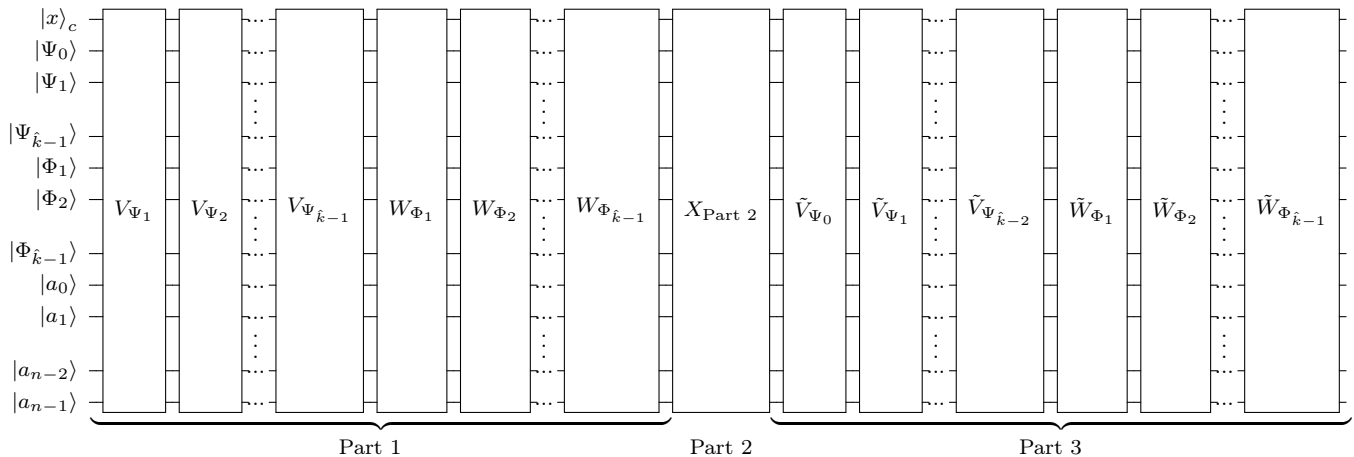}$
    \normalsize
    \caption{The quantum algorithm that implements $T_n^{\text{FPP}}$ for every FPP with $O(n\sqrt{n})$ queries. The circuit is decomposed into three parts that are explained below.}
    \label{figcircuitsqrt}
\end{figure}
\end{flushright}
\end{widetext} 

\subsection{The quantum algorithm}
Here we present the quantum circuit that realizes the transformation described in the last subsection (Eq.~\eqref{transsqrt}) and show that this algorithm uses $O(n\sqrt{n})$ queries. To keep the procedure as clear as possible, we divide the quantum circuit into three parts.

\subsubsection{Part 1}
First, all target systems $\ket{\Psi_k}$ undergo the transformations
\begin{align}
    \forall \ 1\leq k\leq \hat{k}-1:\ \ket{\Psi_k} \mapsto [U_{\sigma_x(k\cdot \hat{n}-1)}...U_{\sigma_x(0)}]\ket{\Psi_k} \, .
\end{align}
For each $\ket{\Psi_k}$, this is realized by the algorithm $V_{\Psi_k}$ given in Fig.~\ref{figpart1}. Here, depending on the state $\ket{x}$, in each step $i=0,1,...,n-1$, the target system $\ket{\Psi_k}$ is swapped with $\ket{a_i}$ if and only if $U_i$ is contained in $[U_{\sigma_x(k\cdot \hat{n}-1)}...U_{\sigma_x(0)}]$.

\newcommand{\vsmaller}{0}
\newcommand{\vhspace}{0.3}
\newcommand{\vC}{0.5}
\newcommand{\vR}{1}

\begin{figure}[h!]
\smaller[\vsmaller]
\begin{flushright}
$\input{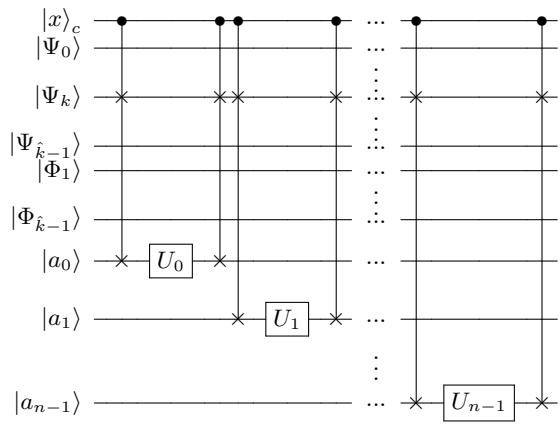}$
\end{flushright}
\normalsize
    \caption{Implementation of $V_{\Psi_k}$}
    \label{figpart1}
\end{figure}

To understand why this circuit realizes the above transformation, note that $[U_{\sigma_x(k\cdot \hat{n}-1)}...U_{\sigma_x(0)}]$ is by construction a block of $k\cdot \hat{n}$ unitaries in descending order and the unitary with the smallest index has to be applied first. By going step by step through each of the $n$ possible unitaries $U_0$ until $U_{n-1}$, exactly those unitaries contained in the ordered block are applied on the target system $\ket{\Psi_k}$ and all the others are applied on the corresponding auxiliary system. Each of these unitaries $V_{\Psi_k}$ consumes $n$ queries.\\

Similarly, the transformations
\begin{align}
    \forall \ 1\leq k\leq \hat{k}-1:\ \ket{\Phi_k} \mapsto \{U_{\sigma_x(n-1)}...U_{\sigma_x(k\cdot \hat{n})}\}\ket{\Phi_k}
\end{align}
are realized with the algorithm $W_{\Phi_k}$ given by the circuit in Fig.~\ref{figpart12}. Here, in each step, the target system $\ket{\Phi_k}$ is swapped with $\ket{a_i}$ if and only if the gate $U_i$ is contained in $\{U_{\sigma_x(n-1)}...U_{\sigma_x(k\cdot \hat{n})}\}$.

\begin{figure}[h!]
\smaller[\vsmaller]
\begin{flushright}
$\input{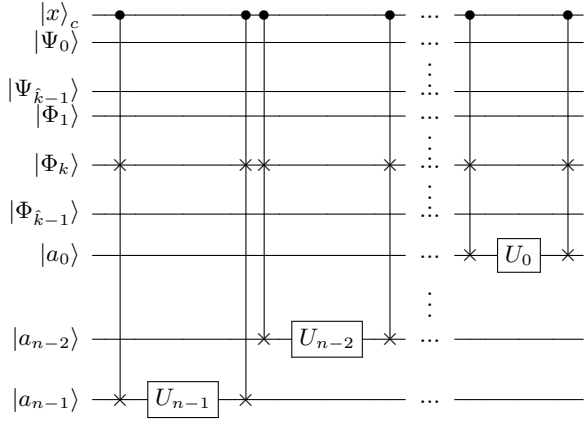}$
\end{flushright}
\normalsize
    \caption{Implementation of $W_{\Phi_k}$}
    \label{figpart12}
\end{figure}

The only difference is that the unitaries in this block are arranged in ascending order and the unitary with the highest index is applied first. Each of these $W_{\Phi_k}$ consumes again $n$ queries.

\subsubsection{Part 2}
In the second part, we realize the transformations:
\begin{align}
\begin{split}
    &\forall \ 0\leq k\leq \hat{k}-2:\\
    &[U_{\sigma_x(k\cdot \hat{n}-1)}...U_{\sigma_x(0)}] \ket{\Psi_k}\\
    & \mapsto(U_{\sigma_x((k+1)\cdot \hat{n}-1)}...U_{\sigma_x(k\cdot \hat{n})})[U_{\sigma_x(k\cdot \hat{n}-1)}...U_{\sigma_x(0)}] \ket{\Psi_k}\\\\
    &\text{and for} \ k= \hat{k}-1:\\
    &[U_{\sigma_x((\hat{k}-1)\cdot \hat{n}-1)}...U_{\sigma_x(0)}] \ket{\Psi_{\hat{k}-1}}\\
    & \mapsto(U_{\sigma_x(n-1)}...U_{\sigma_x((\hat{k}-1)\cdot \hat{n})})[U_{\sigma_x((\hat{k}-1)\cdot \hat{n}-1)}...U_{\sigma_x(0)}] \ket{\Psi_{\hat{k}-1}}
\end{split}
\end{align}
with the algorithm in Fig.~\ref{figpart2}. In each step \mbox{$i=0,1,...,\hat{n}-1$}, every $\ket{\Psi_k}$ is swapped with the auxiliary system $\ket{a_{\sigma_x(k\cdot \hat{n}+i)}}$. In this way, $U_{\sigma_x(k\cdot \hat{n}+i)}$ acts on $\ket{\Psi_k}$, and afterwards $\ket{\Psi_k}$ and $\ket{a_{\sigma_x(k\cdot \hat{n}+i)}}$ are swapped back. After $\hat{n}$ steps, the entire block $(U_{\sigma_x((k+1)\cdot \hat{n}-1)}...U_{\sigma_x(k\cdot \hat{n})})$ is applied on each target system $\ket{\Psi_k}$. (For $k=\hat{k}-1$, we already stop after the step in which $U_{\sigma_x(n-1)}$ is applied on $\ket{\Psi_{\hat{k}-1}}$.)

\begin{figure}[h!]
\smaller[2]
\begin{flushright}
$\input{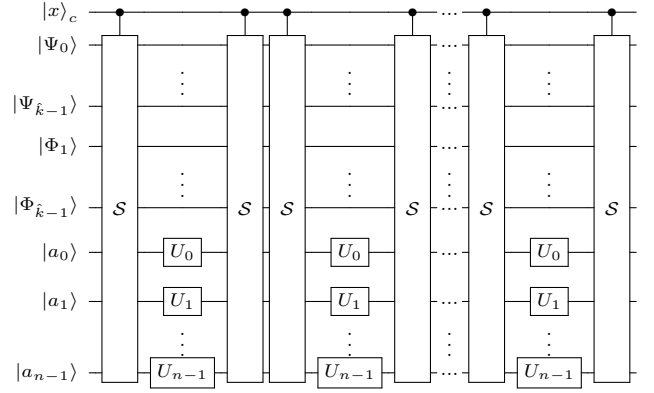}$
\end{flushright}
\normalsize
    \caption{Implementation of $X_{\text{Part 2}}$}
    \label{figpart2}
\end{figure}

Note that the target systems $\ket{\Phi_k}$ are unaffected by this part of the algorithm. In each of the $\hat{n}$ steps, $n$ queries are consumed. This part of the algorithm is similar to the algorithm presented in Subsection~\ref{secsimswitch}. The difference is that we swap several target systems simultaneously, instead of only one.

\subsubsection{Part 3}
Finally, the remaining blocks $[U_{\sigma_x(n-1)}...U_{\sigma_x((k+1)\cdot \hat{n})}]$ need to be applied on the target systems $\ket{\Psi_k}$:
\begin{align}
\begin{split}
    &\forall \ 0\leq k\leq \hat{k}-2:\\
    &(U_{\sigma_x((k+1)\cdot \hat{n}-1)}...U_{\sigma_x(k\cdot \hat{n})})[U_{\sigma_x(k\cdot \hat{n}-1)}...U_{\sigma_x(0)}] \ket{\Psi_k}\\
    & \hspace{2cm}\mapsto \Pi_{xk}\ket{\Psi_k} \, .
\end{split}
\end{align}

Similarly, the blocks $\{U_{\sigma_x(k\cdot \hat{n}-1)}...U_{\sigma_x(0)}\}$ need to be applied on $\ket{\Phi_k}$:
\begin{align}
\begin{split}
    &\forall \ 1\leq k\leq \hat{k}-1:\\
    &\{U_{\sigma_x(n-1)}...U_{\sigma_x(k\cdot \hat{n})}\}\ket{\Phi_k}\mapsto \tilde{\Pi}^r_{xk}\ket{\Phi_k} \, .
\end{split}
\end{align}
Both transformations are completely analogous to the procedure in Part 1 and require $2\cdot(\hat{k}-1)\cdot n$ queries in total.\\

\subsubsection{Query complexity}
The number of queries in Part 1 amounts to \mbox{$2\cdot (\hat{k}-1)\cdot n$}. For Part 2, we need $\hat{n}\cdot n$ queries, while for the last Part, \mbox{$2\cdot (\hat{k}-1)\cdot n$} black-box gates are called. Summing these together gives
\begin{align}
    Q=(\hat{n}+4\cdot \hat{k}-4)\cdot n
\end{align}
queries in total. Using \mbox{$\hat{n}:=\ceil*{\sqrt{n}}<\sqrt{n}+1$} and \mbox{$\hat{k}:=\ceil*{\frac{n}{\hat{n}}}<\frac{n}{\hat{n}}+1\leq \frac{n}{\sqrt{n}}+1=\sqrt{n}+1$} we obtain:
\begin{align}
    Q<(5\cdot \sqrt{n}+1)\cdot n \, .
\end{align}

It is important to ensure that the auxiliary systems factorize out at the end of the algorithm. To see that this is indeed true, we observe that in each part of the algorithm a gate $U_i$ acts either on a target system $\ket{\Psi_k}$, a target system $\ket{\Phi_k}$ or the auxiliary system $\ket{a_i}$. All together, for every $i\in\{0,1,...,n-1\}$, the gate $U_i$ appears exactly $\hat{n}+4\cdot \hat{k}-4$ times in the algorithm. Furthermore, it acts, independently of the permutation $\Pi_x$, on each of the $\hat{k}$ target systems $\ket{\Psi_k}$ and each of the $\hat{k}-1$ target systems $\ket{\Phi_k}$ exactly once. This is true since the expressions $\Pi_{xk}$ and $\tilde{\Pi}_{xk}$ are by themselves permutations of the $n$ unitaries and contain each $U_i$ exactly once. In all other remaining instances, $U_i$ acts on the auxiliary system $\ket{a_i}$, which therefore ends up in the state $(U_i)^{k_i}\ket{a_i}$ with \mbox{$k_i=\hat{n}+2\cdot \hat{k}-3$}, independent of the state of the control system $\ket{x}$. In total, we have shown that the algorithm realizes the desired transformation given in Eq.~\eqref{transsqrt} for every \mbox{$x\in\{0,1,...,n!-1\}$}:
\begin{align}
\begin{split}
    &T_n^{\text{FPP}}\ket{x}_c\otimes\bigotimes^{\hat{k}-1}_{k=0}\ket{\Psi_k}\otimes \bigotimes^{\hat{k}-1}_{k=1}\ket{\Phi_k}\otimes\bigotimes^{n-1}_{i=0}\ket{a_i}\\
    &=\ket{x}_c\otimes\bigotimes^{\hat{k}-1}_{k=0}\Pi_{xk}\ket{\Psi_k}\otimes \bigotimes^{\hat{k}-1}_{k=1}\tilde{\Pi}^r_{xk}\ket{\Phi_k}\otimes\bigotimes^{n-1}_{i=0}(U_i)^{k_i}\ket{a_i}
    \, .    
\end{split}
\end{align}
Hence, we conclude that this algorithm solves every Fourier promise problem with $O(n\sqrt{n})$ queries.

\section{Conclusion}
The introduction of indefinite causal structures raised the question of the existence of computational tasks which can be solved more efficiently using these structures, compared to causally ordered protocols.
Fourier promise problems were initially introduced to demonstrate that such a computational advantage exists even in the asymptotic case. The problems were shown to be solved with $n$ queries using the quantum-$n$-switch and it was expected that the most efficient solution with a causal protocol requires the simulation of the quantum-$n$-switch and, hence, $O(n^2)$ queries.
We showed that for the specific task of solving Fourier promise problems, the advantage of using the quantum-$n$-switch is significantly smaller than previously expected; in fact, we presented a causal quantum algorithm, within the standard quantum circuit model, which solves the same computational tasks almost as efficiently. More precisely, we presented a causal algorithm that solves a specific Fourier promise problem with $O(n\log{n})$ queries and conjectured that all problems of this class can be solved with a similar efficiency. Furthermore, we presented a causal algorithm that solves every Fourier promise problem with $O(n\sqrt{n})$ queries.\\

We conclude that for the specific class of problems considered here, the advantage of algorithms that use a quantum-controlled ordering of gates, compared to causally ordered algorithms, is smaller than first expected. Nevertheless, although we could show that the simulation of the quantum-$n$-switch is not the most efficient causal algorithm for solving FPPs, it is in principle possible to construct tasks which profit more from using the quantum-$n$-switch. One promising class of problems is already introduced in Ref.~\cite{taddei2020experimental}. The so-called Hadamard promise problems are a variation of the Fourier promise problems with the advantage that they are better suited for experimental realization. In Ref.~\cite{taddei2020experimental}, the authors present only one specific task with four unitaries without providing a generalization to an arbitrary number of gates. In this sense, it needs further investigation whether there is a significant asymptotic scaling advantage for Hadamard promise problems or whether the methods we developed in this work also apply to these problems. All in all, this raises the important challenge of finding computational tasks for which the quantum-$n$-switch and indefinite causal structures in general provide a significant advantage.





\section*{Acknowledgements}
We thank Mateus Araújo, Philippe Allard Guérin, Ämin Baumeler and Anne-Catherine de la Hamette for insightful discussions and useful comments on the manuscript. We  acknowledge  financial  support  from the  Austrian  Science  Fund  (FWF)  through  BeyondC (F7103-N38), the project no. I-2906, as well as support by  the  John Templeton Foundation through grant 61466, The Quantum Information Structure of Spacetime (qiss.fr), the Foundational Questions Institute (FQXi) and the research platform TURIS. The opinions expressed in this publication are those of the authors and do not necessarily reflect the views of the John Templeton Foundation. Furthermore, we are thankful for the very helpful tutorial on Q-circuit \cite{eastin2004qcircuit} that helped a lot to create the quantum circuits in \LaTeX.

\bibliography{bib}{}
\bibliographystyle{ieeetr}

\appendix

\begin{widetext}

\section{Representation of $x$ in the algorithm for the FPP with the factoradic labeling} \label{appedixA}
For the algorithm in Section \ref{seclog} to work, it is necessary that every $x\in\{0,1,...,n!-1\}$ can be represented in the basis given in Eq.~\eqref{equlog}. Here, we prove this statement:\\\\
\begin{lemma}

For every $x\in\{0,1,...,n!-1\}$, there exists $(n-1)\cdot \ceil*{\log_2(n)}$ bits $c^x_{k,i}\in\{0,1\}$ such that:
\begin{align}
    x=\sum_{k=1}^{n-1}\sum_{i=1}^{\ceil*{\log_2(n)}}c^x_{k,i}\cdot \ceil*{\frac{k}{2^i}}\cdot k! \ . \label{appalog}
\end{align}

\end{lemma}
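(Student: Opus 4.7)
The plan is to reduce the statement to a subset-sum coverage problem via the factorial number system of Eq.~\eqref{loglabel}. Every $x\in\{0,1,\ldots,n!-1\}$ admits a unique factoradic decomposition $x=\sum_{k=1}^{n-1}a_k\cdot k!$ with $a_k\in\{0,1,\ldots,k\}$. Substituting this into~\eqref{appalog}, it suffices to show that, for each fixed $k\in\{1,\ldots,n-1\}$, every integer in $\{0,1,\ldots,k\}$ can be written as $\sum_{i=1}^{\hat{i}}c_i\cdot v_i$ with bits $c_i\in\{0,1\}$, where I write $v_i:=\ceil*{k/2^i}$ and $\hat{i}:=\ceil*{\log_2 n}$. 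Equivalently, I need the subset sums of $\{v_1,\ldots,v_{\hat{i}}\}$ to cover $\{0,1,\ldots,k\}$.

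To do so I would invoke the standard subset-sum coverage criterion: for positive integers $w_1\le w_2\le\cdots\le w_m$ with $w_1=1$ and $w_{j+1}\le 1+\sum_{i\le j}w_i$ for every $j$, the subset sums of $\{w_1,\ldots,w_m\}$ realize every integer in $\{0,1,\ldots,w_1+\cdots+w_m\}$. Relabeling the decreasing sequence $v_1\ge v_2\ge\cdots\ge v_{\hat{i}}$ in increasing order (i.e.\ setting $w_j:=v_{\hat{i}-j+1}$) reduces the problem to three items: (a) $v_{\hat{i}}=1$, which holds because $1\le k\le n-1<2^{\hat{i}}$; (b) $\sum_{i=1}^{\hat{i}}v_i\ge k$, so that the covered range contains $\{0,\ldots,k\}$; and (c) the telescoping inequality $v_i\le 1+\sum_{j=i+1}^{\hat{i}}v_j$ for every $i\in\{1,\ldots,\hat{i}\}$.

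For (b) I would rewrite $v_i=\floor*{(k-1)/2^i}+1$ and apply the Legendre-type identity $\sum_{i\ge 1}\floor*{m/2^i}=m-s_2(m)$, where $s_2(m)$ denotes the binary digit-sum of $m$, to obtain
\begin{align*}
\sum_{i=1}^{\hat{i}}v_i=\hat{i}+(k-1)-s_2(k-1).
\end{align*}
Since $k\le 2^{\hat{i}}-1$ forces $k-1\le 2^{\hat{i}}-2$, and the only integer in $[0,2^{\hat{i}}-1]$ with digit-sum $\hat{i}$ is $2^{\hat{i}}-1$ itself, one has $s_2(k-1)\le\hat{i}-1$, whence $\sum v_i\ge k$. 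For (c) I would proceed by downward induction on $i$: the base case $i=\hat{i}$ is just $v_{\hat{i}}\le 1$, and in the inductive step I would combine $v_i\le 2v_{i+1}$, which is immediate from $2\ceil*{k/2^{i+1}}\ge k/2^i$ together with the integrality of the left-hand side, with the inductive hypothesis $\sum_{j=i+2}^{\hat{i}}v_j\ge v_{i+1}-1$ to conclude $1+\sum_{j=i+1}^{\hat{i}}v_j\ge 2v_{i+1}\ge v_i$.

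The main obstacle will be (c): the chain closes only because $v_i\le 2v_{i+1}$ and the inductive bound can both be tight simultaneously, so the argument has no slack and depends essentially on the \emph{ceiling} in $\ceil*{k/2^i}$ (the analogous statement with floors would fail, e.g., for $k=2^{\hat{i}}-1$). Once (a)--(c) are in place, the subset-sum criterion immediately produces the bits $c^x_{k,i}$ representing $a_k$, and substituting into the factoradic decomposition of $x$ yields the representation~\eqref{appalog}.
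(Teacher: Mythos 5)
Your proposal is correct and follows essentially the same route as the paper's own proof: reduce to the factoradic digits $a_k$, then show that the ceilings $\lceil k/2^i\rceil$ form a complete sequence covering $\{0,\dots,k\}$ via the same prefix inequality (your downward induction through $v_i\le 2v_{i+1}$ is exactly the paper's telescoped application of $\lceil x+y\rceil\le\lceil x\rceil+\lceil y\rceil$). The only divergence is your derivation of $\sum_{i}\lceil k/2^i\rceil\ge k$ from the binary digit-sum identity, where the paper instead uses $\sum_i\lceil x_i\rceil\ge\lceil\sum_i x_i\rceil$ together with the geometric series; both arguments are sound.
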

\begin{proof}
First, we note that every $x\in\{0,...,n!-1\}$ can be represented in the factorial number system:
\begin{align}
    x=\sum_{k=1}^{n-1}a_k\cdot k! \, ,\label{appafac}
\end{align}
with coefficients $a_k\in \{0,1,...,k\}$. Hence, it is enough to show that for every $1\leq k\leq n-1$, every $a_k\in \{0,...,k\}$ can be written as:
\begin{align}
    a_k=\sum_{i=1}^{\ceil*{\log_2(n)}}c_{k,i}\cdot \ceil*{\frac{k}{2^i}} \, , \label{representak}
\end{align}
where $c_{k,i}\in\{0,1\}$. For a clearer notation, we will drop the index ``$x$'' in $c^x_{k,i}$ from now on. Furthermore, let $1\leq k\leq n-1$ throughout the proof. In a first step, we prove by induction that the bits $c_{k,1}, c_{k,2}, ..., c_{k,\ceil*{\log_2(n)}}$ can represent every number between 0 and $\sum_{i=1}^{\ceil*{\log_2(n)}} \ceil*{\frac{k}{2^i}}$ (according to Eq.~\eqref{representak}). In a second step, we show that $k\leq \sum_{i=1}^{\ceil*{\log_2(n)}} \ceil*{\frac{k}{2^i}}$, from which we conclude that indeed every $a_k\in \{0,...,k\}$ can be written in the above form.\\\\

\textit{Step 1:}\\
For every number $0\leq a_k\leq \sum_{i=1}^{\ceil*{\log_2(n)}} \ceil*{\frac{k}{2^i}}$, there exist bits $c_{k,1}, c_{k,2}, ..., c_{k,\ceil*{\log_2(n)}}\in\{0,1\}$ such that
\begin{align}
    a_k=\sum_{i=1}^{\ceil*{\log_2(n)}}c_{k,i}\cdot \ceil*{\frac{k}{2^i}} \ .\label{representation}
\end{align}
\textbf{Base case:}\\
Every number $0\leq a_k\leq \sum_{i=\ceil*{\log_2(n)}}^{\ceil*{\log_2(n)}} \ceil*{\frac{k}{2^i}}=1$ can be represented with $c_{k,1}=...=c_{k,\ceil*{\log_2(n)}-1}=0$ and $c_{k,\ceil*{\log_2(n)}}=a_k$, since $\ceil*{\frac{k}{2^{\ceil*{\log_2(n)}}}}=1$ whenever $1\leq k\leq n-1$.
\\\\
\textbf{Induction step:}\\
Suppose that $c_{k,1}=...=c_{k,(j-1)}=0$ and the bits $c_{k,j},c_{k,(j+1)}, ..., c_{k,\ceil*{\log_2(n)}}\in\{0,1\}$ are sufficient to represent every number $a_k$ between 0 and $\sum_{i=j}^{\ceil*{\log_2(n)}} \ceil*{\frac{k}{2^i}}$ (according to Eq.~\eqref{representation}). If we now flip the bit $c_{k,(j-1)}$ to one, we can also represent every number $a_k$ between $\ceil*{\frac{k}{2^{j-1}}}$ and $\ceil*{\frac{k}{2^{j-1}}}+\sum_{i=j}^{\ceil*{\log_2(n)}}\ceil*{\frac{k}{2^i}}=\sum_{i=j-1}^{\ceil*{\log_2(n)}}\ceil*{\frac{k}{2^i}}$, although $c_{k,1}=...=c_{k,(j-2)}=0$. To conclude that the bits $c_{k,(j-1)},c_{k,j},c_{k,(j+1)}, ..., c_{k,\ceil*{\log_2(n)}}\in\{0,1\}$ (while all other bits are set to zero) are enough to represent every number between 0 and $\sum_{i=j-1}^{\ceil*{\log_2(n)}}\ceil*{\frac{k}{2^i}}$, we have to show that:
\begin{align}
    \ceil*{\frac{k}{2^{j-1}}}\leq \sum_{i=j}^{\ceil*{\log_2(n)}} \ceil*{\frac{k}{2^i}}+1 \ .
\end{align}
Since $\ceil*{x+y}\leq \ceil*{x}+\ceil*{y}$ for all real numbers $x$ and $y$, this can be inferred from repeating
\begin{align}
    \ceil*{\frac{k}{2^{j-1}}}=\ceil*{\frac{k}{2^{j}}+\frac{k}{2^{j}}}\leq \ceil*{\frac{k}{2^{j}}}+\ceil*{\frac{k}{2^{j}}}
\end{align}
for $j+1$, $j+2$, ..., $\ceil*{\log_2(n)}$:
\begin{align}
    \ceil*{\frac{k}{2^{j-1}}}\leq \ceil*{\frac{k}{2^{j}}}+\ceil*{\frac{k}{2^{j}}}\leq \ceil*{\frac{k}{2^{j}}}+\ceil*{\frac{k}{2^{j+1}}}+\ceil*{\frac{k}{2^{j+1}}}\leq ... \leq \sum_{i=j}^{\ceil*{\log_2(n)}} \ceil*{\frac{k}{2^i}} + \ceil*{\frac{k}{2^{\ceil*{\log_2(n)}}}}=\sum_{i=j}^{\ceil*{\log_2(n)}} \ceil*{\frac{k}{2^i}}+1
\end{align}
and using $\ceil*{\frac{k}{2^{\ceil*{\log_2(n)}}}}=1$ (for $1\leq k\leq n-1$) in the last step.\\

Hence, by induction, we have shown that the bits $c_{k,1}, c_{k,2}, ..., c_{k,\ceil*{\log_2(n)}}$ can represent every number $a_k$ between 0 and $\sum_{i=1}^{\ceil*{\log_2(n)}} \ceil*{\frac{k}{2^i}}$ (according to Eq.~\eqref{representation}).\\

\textit{Step 2:}\\
To show that every $a_k\in\{0,...,k\}$ can be represented with these bits, we have to prove that:
\begin{align}
    \sum_{i=1}^{\ceil*{\log_2(n)}} \ceil*{\frac{k}{2^i}}\geq k  \, .
\end{align}
This is true since:
\begin{align}
\begin{split}
    \sum_{i=1}^{\ceil*{\log_2(n)}} \ceil*{\frac{k}{2^i}} \geq \ceil*{\sum_{i=1}^{\ceil*{\log_2(n)}} \frac{k}{2^i}} = \ceil*{\left(1-\frac{1}{2^{\ceil*{\log_2(n)}}}\right)k}
    \geq \ceil*{k-\frac{k}{n}}\geq k \, ,
\end{split}
\end{align}
where we have used $\sum_i \ceil*{x_i}\geq \ceil*{\sum_i x_i}$ in the first step, the geometric series in the second, $2^{\ceil*{\log_2(n)}}\geq n$ in the third and $\frac{k}{n}<1$ (for $1\leq k\leq n-1$) in the last step. This completes the proof.
\end{proof}

If one is interested in obtaining the bits $c_{k,i}$ for a given $a_k$, one can use recursion: $c_{k,1}=1$ if $a_k \geq \ceil*{\frac{k}{2}}$ and 0 if $a_k < \ceil*{\frac{k}{2}}$. The bit $c_{k,2}$ is 1 if $a_k-c_{k,1}\cdot \ceil*{\frac{k}{2}} \geq \ceil*{\frac{k}{2^2}}$ and 0 if $a_k-c_{k1}\cdot \ceil*{\frac{k}{2}} < \ceil*{\frac{k}{2^2}}$. Following this, the bit $c_{k,j}$ is 1 if $a_k-\sum_{i=1}^{j-1}c_{k,i}\cdot \ceil*{\frac{k}{2^i}}\geq \ceil*{\frac{k}{2^j}}$ and 0 if $a_k-\sum_{i=1}^{j-1}c_{k,i}\cdot \ceil*{\frac{k}{2^i}}< \ceil*{\frac{k}{2^j}}$. Note however that the representation of $x$ (and the corresponding coefficients $a_k$ of $x$) in this basis is not unique and this procedure is not the only one that leads to a representation of $x$ in this basis.

\section{Proof of Lemma~\ref{lemmasqrt}} \label{appedixB}

\setcounter{definition}{1}
\setcounter{lemma}{0}

In this section, we will prove the remaining statement of Section~\ref{secsqrt}. As a help to the reader, we restate the important definition for this lemma.

\begin{definition}
For every permutation \mbox{$\Pi_x=(U_{\sigma_x(n-1)}...U_{\sigma_x((\hat{k}-1)\cdot \hat{n})})...\ 
    (U_{\sigma_x(2\cdot \hat{n}-1)}...U_{\sigma_x(\hat{n})})\ 
    (U_{\sigma_x(\hat{n}-1)}...U_{\sigma_x(0)})$} of the $n$ unitaries, let
\begin{align}
\Pi_{xk}&:=
\begin{cases}
    [U_{\sigma_x(n-1)}...U_{\sigma_x(\hat{n})}]\ 
    (U_{\sigma_x(\hat{n}-1)}...U_{\sigma_x(0)})\\
    [U_{\sigma_x(n-1)}...U_{\sigma_x((k+1)\cdot \hat{n})}]\ 
    (U_{\sigma_x((k+1)\cdot \hat{n}-1)}...U_{\sigma_x(k\cdot \hat{n})})\ 
    [U_{\sigma_x(k\cdot \hat{n}-1)}...U_{\sigma_x(0)}]\\
    (U_{\sigma_x(n-1)}...U_{\sigma_x((\hat{k}-1)\cdot \hat{n})})\ 
    [U_{\sigma_x((\hat{k}-1)\cdot \hat{n}-1)}...U_{\sigma_x(0)}]
\end{cases}
&
\begin{array}{l}
    k=0\\
    k=1,...,\hat{k}-2\\
    k=\hat{k}-1
\end{array} \label{pixkapp}
\\
\tilde{\Pi}^r_{xk}&:=
\begin{cases}
    \{U_{\sigma_x(k\cdot \hat{n}-1)}...U_{\sigma_x(0)}\}\ 
    \{U_{\sigma_x(n-1)}...U_{\sigma_x(k\cdot \hat{n})}\}\\
    \{U_{\sigma_x((\hat{k}-1)\cdot \hat{n}-1)}...U_{\sigma_x(0)}\}\ 
    \{U_{\sigma_x(n-1)}...U_{\sigma_x((\hat{k}-1)\cdot \hat{n})}\}
\end{cases}
&
\begin{array}{l}
    k=1,...,\hat{k}-2\\
    k=\hat{k}-1
\end{array} \label{pixktildeapp}
\end{align}
where $[U_{i_1}U_{i_2}...U_{i_j}]$ is defined to be the descending ordering of the unitaries $U_{i_1}U_{i_2}...U_{i_j}$, while $\{U_{i_1}U_{i_2}...U_{i_j}\}$ is the ascending ordering of them and $(U_{i_1}U_{i_2}...U_{i_j})=U_{i_1}U_{i_2}...U_{i_j}$ leaves the string invariant. (\mbox{$\hat{n}:=\ceil*{\sqrt{n}}$} and \mbox{$\hat{k}:=\ceil*{\frac{n}{\hat{n}}}$})

\end{definition}

\begin{lemma}
For every set of ($d$-dimensional) unitaries $\{U_i\}_0^{n-1}$ that satisfy pairwise commutation relations, the following relation holds:
\begin{align}
\begin{split}
    &\bigotimes^{\hat{k}-1}_{k=0}\Pi_{xk}\ket{\Psi_k}\otimes \bigotimes^{\hat{k}-1}_{k=1}\tilde{\Pi}^r_{xk}\ket{\Phi_k}=\Pi_x\ket{\Psi_0}\otimes \bigotimes^{\hat{k}-1}_{k=1}\Pi\ket{\Psi_k}\otimes \Pi^r\ket{\Phi_k} \, .
    \label{toproofsqrt}
\end{split}
\end{align}
Here, $\Pi:=U_{n-1}...U_1U_0$ denotes the descending order of all unitaries, $\Pi^r:=U_{0}U_1...U_{n-1}$ denotes the ascending order of all unitaries and $\ket{\Psi_k},\ket{\Phi_k}\in \mathcal{H}^d$ are arbitrary $d$-dimensional states.
\end{lemma}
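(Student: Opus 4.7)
The plan is to reduce the operator identity~\eqref{toproofsqrt} to a purely scalar identity by exploiting that, under the pairwise commutation relations $U_jU_k=\alpha_{jk}U_kU_j$, any two orderings of the multiset $\{U_0,\ldots,U_{n-1}\}$ agree up to a scalar in $\mathbb{C}^\ast$. A single bubble-sort pass yields, for every such ordering $\tau$,
\begin{align*}
\tau=\Phi(\tau)\,\Pi,\qquad \Phi(\tau):=\prod_{\substack{i<j\\ U_i\text{ left of }U_j\text{ in }\tau}}\alpha_{ij},
\end{align*}
and analogously $\Pi^r=\Phi(\Pi^r)\,\Pi$. Since every operator on both sides of~\eqref{toproofsqrt} is such a reordering and acts on its own tensor factor, I would pull the scalar phase out of every term, cancel the common tensor of descending (resp.\ ascending) products acting on the $\ket{\Psi_k}$ and $\ket{\Phi_k}$, and so reduce the lemma to the numerical identity
\begin{align*}
\Phi(\Pi_{x0})\prod_{k=1}^{\hat{k}-1}\Phi(\Pi_{xk})\prod_{k=1}^{\hat{k}-1}\Phi(\tilde{\Pi}^r_{xk})=\Phi(\Pi_x)\,\Phi(\Pi^r)^{\hat{k}-1}.
\end{align*}

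I would then prove this scalar identity factor-by-factor. Writing $L_\tau(i,j):=\mathds{1}[U_i\text{ left of }U_j\text{ in }\tau]$ and using that $L_{\Pi^r}(i,j)=1$ for all $i<j$, matching the exponent of each $\alpha_{ij}$ reduces the claim to the combinatorial identity
\begin{align*}
\sum_{k=0}^{\hat{k}-1}L_{\Pi_{xk}}(i,j)+\sum_{k=1}^{\hat{k}-1}L_{\tilde{\Pi}^r_{xk}}(i,j)=L_{\Pi_x}(i,j)+(\hat{k}-1)\qquad \forall\, i<j.
\end{align*}

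To verify this per-pair identity I would split into three cases according to the blocks $b(i),b(j)\in\{0,\ldots,\hat{k}-1\}$ to which $U_i$ and $U_j$ belong in $\Pi_x$. Inside any sorted-descending outer piece of $\Pi_{xk}$ the lower-index $U_i$ lies to the right of $U_j$, while inside any sorted-ascending outer piece of $\tilde{\Pi}^r_{xk}$ the order is reversed; both indicators therefore depend on $b(i),b(j)$ only through comparisons with $k$. For $b(i)=b(j)=k^\ast$, only the $k=k^\ast$ term of the first sum contributes (yielding $L_{\Pi_x}(i,j)$) while every $\tilde{\Pi}^r_{xk}$ contributes $1$ (yielding $\hat{k}-1$). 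For $b(i)<b(j)$, every $\Pi_{xk}$ places $U_j$ left of $U_i$ and every $\tilde{\Pi}^r_{xk}$ places $U_i$ left of $U_j$, so the left side equals $0+(\hat{k}-1)$, matching $L_{\Pi_x}(i,j)=0$. For $b(i)>b(j)$, the first sum equals $b(i)-b(j)+1$ (ones precisely for $k\in\{b(j),\ldots,b(i)\}$) and the second equals $(\hat{k}-1)-(b(i)-b(j))$ (zeros precisely for $k\in\{b(j)+1,\ldots,b(i)\}$), summing to $\hat{k}=1+(\hat{k}-1)$.

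The main obstacle will be the bookkeeping in the case $b(i)>b(j)$ and in the boundary situations $k=0$ (where $\Pi_{x0}$ has no sorted-descending ``last'' piece and $\tilde{\Pi}^r_{x0}$ is absent) and $k=\hat{k}-1$ (where $\Pi_{x,\hat{k}-1}$ has no ``first'' piece): one must carefully identify which of the three pieces of each $\Pi_{xk}$ (respectively two pieces of each $\tilde{\Pi}^r_{xk}$) contains $U_i$ and $U_j$. Once this membership is fixed in each subcase, the remaining arithmetic on the block indices is immediate and the identity follows.
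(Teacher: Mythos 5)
Your proof is correct, but it organizes the bookkeeping differently from the paper. Both arguments rest on the same starting point: under pairwise commutation relations any full ordering $\tau$ of the $n$ unitaries equals $\Phi(\tau)\,\Pi$ with $\Phi(\tau)$ the product of $\alpha_{ij}$ over the pairs $i<j$ for which $U_i$ stands to the left of $U_j$, so the lemma reduces to a scalar identity. From there the paper proceeds structurally: it factors each $\Pi_{xk}=\alpha_{xk}\,\tilde{\Pi}_{xk}$, where $\alpha_{xk}$ collects exactly those inversions of $\Pi_x$ whose lower position lies in block $k$ (so that $\prod_k\alpha_{xk}$ telescopes to the total phase of $\Pi_x$), and then disposes of the leftover block-sorted permutations $\tilde{\Pi}_{xk}$ by pairing each with its reversal $\tilde{\Pi}^r_{xk}$, whose phase relative to $\Pi^r$ is the exact inverse. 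You instead match the exponent of each individual $\alpha_{ij}$ on the two sides and verify the resulting indicator identity by a three-way case split on the blocks $b(i),b(j)$; I checked all three cases, including the boundary forms of $\Pi_{x0}$ and $\Pi_{x,\hat{k}-1}$ (which simply amount to one outer piece being empty), and your counts $b(i)-b(j)+1$ and $(\hat{k}-1)-\bigl(b(i)-b(j)\bigr)$ are right. Your version is more mechanical and easier to verify line by line; the paper's version makes the design rationale explicit, namely that the $\tilde{\Pi}^r_{xk}$ exist precisely to cancel the unwanted cross-block phases introduced by the sorted outer pieces of the $\Pi_{xk}$, a cancellation that in your account appears only implicitly as the surplus $b(i)-b(j)$ ones in the first sum against the missing $b(i)-b(j)$ ones in the second. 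One cosmetic point: a single bubble-sort pass does not sort a general string; the correct (and standard) justification for your formula for $\Phi(\tau)$ is that each inversion must be undone by exactly one adjacent transposition, so the accumulated phase is the product over all inversions regardless of the sorting procedure.
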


\begin{proof}
We will prove this by expressing both sides in terms of pairwise phases and comparing them at the end.
Recall that the pairwise phase $\alpha_{jk}$ is defined via $U_jU_k=\alpha_{jk}\ U_kU_j$ and from comparing this with $U_kU_j=\alpha_{kj}\ U_jU_k$, we obtain $\alpha_{jk}=(\alpha_{kj})^{-1}$.\\

The total phase of the permutation $\Pi_x$ is the product of the following pairwise phases $\alpha_{ij}$:
\begin{align}
    \Pi_x=\prod_{\substack{n-1\geq i>j\geq 0\\\text{with } \sigma_x(i)<\sigma_x(j) }}\alpha_{\sigma_x(i)\sigma_x(j)}\cdot \Pi \, .
    \label{prod}
\end{align}
To obtain this expression, we can compare every pair of unitaries and check whether they have the same order as in $\Pi$. If not, the corresponding phase appears in the product. To be more precise, we can start with $j=0$ (with the matrix $U_{\sigma(0)}$) and check for every $i>j=0$ if $\sigma_x(i)<\sigma_x(j=0)$. If so, then the order of $U_{\sigma_x(0)}$ and $U_{\sigma_x(i)}$ is reversed between $\Pi_x$ and $\Pi$ and the corresponding phase appears as a factor in the total phase of $\Pi_x$ with respect to $\Pi$. By repeating this procedure for every $n-1\geq j\geq 0$, we obtain the above expression. For example, for $n=4$ and the permutation $\Pi_x=U_1U_2U_0U_3$ we obtain:
\begin{align}
    U_1U_2U_0U_3=\alpha_{03}\cdot\alpha_{23}\cdot\alpha_{13}\cdot\alpha_{12}\cdot U_3U_2U_1U_0 \, .
\end{align}

For the left hand side of the statement, we can compute for every $\Pi_{xk}$ the relative phase to $\Pi$ and for every $\tilde{\Pi}^r_{xk}$ the relative phase to $\Pi^r$ in terms of the pairwise phases $\alpha_{ij}$. One could calculate these phases in a direct way. Here we will follow this approach but with some shortcuts; in a first step, we rewrite for every $k\in\{0,1,.., \hat{k}-2\}$ the first two blocks of $\Pi_{xk}$, namely $[U_{\sigma_x(n-1)}...U_{\sigma_x((k+1)\cdot \hat{n})}]\ (U_{\sigma_x((k+1)\cdot \hat{n}-1)}...U_{\sigma_x(k\cdot \hat{n})})$, into:
\begin{align}
    [U_{\sigma_x(n-1)}...U_{\sigma_x((k+1)\cdot \hat{n})}]\ 
    (U_{\sigma_x((k+1)\cdot \hat{n}-1)}...U_{\sigma_x(k\cdot \hat{n})})=\prod_{\substack{(k+1)\cdot \hat{n}>j\geq k\cdot \hat{n}\\ \text{and } n-1\geq i>j,\\\text{with } \sigma_x(i)<\sigma_x(j) }}\alpha_{\sigma_x(i)\sigma_x(j)}\cdot [U_{\sigma_x(n-1)}...U_{\sigma_x(k\cdot \hat{n})}]\, .\\
\end{align}
To see that this is true, remember that we obtain the phase as a product of the pairwise phases by comparing each pair of positions $n-1\geq i>j\geq k\cdot \hat{n}$. But since the left block $[U_{\sigma_x(n-1)}...U_{\sigma_x((k+1)\cdot \hat{n})}]$ is already ordered, we do not have to consider the cases for which $n-1\geq i>j\geq (k+1)\cdot \hat{n}$. Similarly for $k=\hat{k}-1$, we rewrite:
\begin{align}
    (U_{\sigma_x(n-1)}...U_{\sigma_x((\hat{k}-1)\cdot \hat{n})})=
    \prod_{\substack{n-1\geq j\geq (\hat{k}-1)\cdot \hat{n}\\ \text{and } n-1\geq i>j\\\text{with } \sigma_x(i)<\sigma_x(j) }}\alpha_{\sigma_x(i)\sigma_x(j)}\cdot [U_{\sigma_x(n-1)}...U_{\sigma_x((\hat{k}-1)\cdot \hat{n})}] \, .
\end{align}

To summarize, we can rewrite the permutations $\Pi_{xk}$ into:
\begin{align}
    \Pi_{xk}=\alpha_{xk}\cdot \tilde{\Pi}_{xk} \, ,
    \label{B1}
\end{align}
where:
\begin{align*}
\tilde{\Pi}_{xk}=
\begin{cases}
    [U_{\sigma_x(n-1)}...U_{\sigma_x(0)}]&k=0\\
    [U_{\sigma_x(n-1)}...U_{\sigma_x(k\cdot \hat{n})}]\ 
    [U_{\sigma_x(k\cdot \hat{n}-1)}...U_{\sigma_x(0)}]&k=1,...,\hat{k}-2\\
    [U_{\sigma_x(n-1)}...U_{\sigma_x((\hat{k}-1)\cdot \hat{n})}]\ 
    [U_{\sigma_x((\hat{k}-1)\cdot \hat{n}-1)}...U_{\sigma_x(0)}]&k=\hat{k}
    -1
\end{cases}
\end{align*}
and:
\begin{align*}
\alpha_{xk}=\prod_{\substack{(k+1)\cdot \hat{n}>j\geq k\cdot \hat{n}\\ \text{and } n-1\geq i>j,\\\text{with } \sigma_x(i)<\sigma_x(j) }}\alpha_{\sigma_x(i)\sigma_x(j)}\text{ for every }k\in\{0,1,...,\hat{k}-2\}&&\text{ and }&&
    \alpha_{x(\hat{k}-1)}=\prod_{\substack{n-1\geq j\geq (\hat{k}-1)\cdot \hat{n}\\ \text{and } n-1\geq i>j\\\text{with } \sigma_x(i)<\sigma_x(j) }}\alpha_{\sigma_x(i)\sigma_x(j)} \, .
\end{align*}
The product of all the $\alpha_{xk}$ is exactly the same expression as in Eq.~\eqref{prod}:
\begin{align}
    \prod^{\hat{k}-1}_{k= 0}\alpha_{xk}
    =\prod_{\substack{n-1\geq j\geq 0\\ \text{and } n-1\geq i>j\\\text{with } \sigma_x(i)<\sigma_x(j) }}\alpha_{\sigma_x(i)\sigma_x(j)} \, .
    \label{B3}
\end{align}
In this way, we already obtain all required phases.
For the above example of $\Pi_x=U_1U_2\ U_0U_3$, these expressions read:
\begin{align}
    \Pi_{x0}&=U_2U_1\ U_0U_3=\alpha_{03}\cdot\alpha_{13}\cdot\alpha_{23}\cdot U_3U_2\ U_1U_0&&(\tilde{\Pi}_{x0}=U_3U_2\ U_1U_0)\\
    \Pi_{x1}&=U_1U_2\ U_3U_0=\alpha_{12}\cdot U_2U_1\ U_3U_0&&(\tilde{\Pi}_{x1}=U_2U_1\ U_3U_0)\\
    &&&(\tilde{\Pi}^r_{x1}=U_0U_3\ U_1U_2) \, .
\end{align}

Note that for $k=0$, the permutation $\tilde{\Pi}_{x0}=[U_{\sigma_x(n-1)}...U_{\sigma_x(0)}]$ always equals $\Pi$. On the other hand, the permutations $\tilde{\Pi}_{xk}$ (for $k\geq 1$) would lead in general to additional (unnecessary) phases relative to $\Pi$ but they are exactly compensated by the phases of $\tilde{\Pi}_{xk}^r$ relative to $\Pi^r$. To see this, note that $\tilde{\Pi}_{xk}$ and $\tilde{\Pi}_{xk}^r$ are (by construction) the same permutations but in reversed order. This has the property that the relative phase between $\tilde{\Pi}_{xk}$ and $\Pi$ is exactly the inverse of the relative phase between $\tilde{\Pi}_{xk}^r$ and $\Pi^r$. This is true since, whenever two unitaries $U_i$ and $U_j$ are commuted in $\tilde{\Pi}_{xk}$ relative to $\Pi$ (and we obtain $\alpha_{ij}$ as a factor in the relative phase between $\tilde{\Pi}_{xk}$ and $\Pi$), then the two unitaries are also commuted in $\tilde{\Pi}_{xk}^r$ relative to $\Pi^r$ (and we obtain the phase $\alpha_{ji}=(\alpha_{ij})^{-1}$ as a factor in the relative phase between $\tilde{\Pi}_{xk}^r$ and $\Pi^r$). For the example of $n=4$ and $\Pi_x=U_1U_2U_0U_3$ this becomes:
\begin{align}
\begin{split}
    \tilde{\Pi}_{x1}=U_2U_1\ U_3U_0=&\alpha_{13}\cdot\alpha_{23}\cdot U_3U_2\ U_1U_0\\
    \tilde{\Pi}^r_{x1}=U_0U_3\ U_1U_2=&\alpha_{31}\cdot\alpha_{32}\cdot U_0U_1\ U_2U_3=(\alpha_{13})^{-1}\cdot(\alpha_{23})^{-1}\cdot U_0U_1\ U_2U_3\\\\
    \implies \tilde{\Pi}_{x1}\ket{\Psi_1}\otimes \tilde{\Pi}^r_{x1}\ket{\Phi_1}=&U_2U_1\ U_3U_0\ket{\Psi_1}\otimes U_0U_3\ U_1U_2\ket{\Phi_1}\\
    =&U_3U_2\ U_1U_0\ket{\Psi_1}\otimes U_0U_1\ U_2U_3\ket{\Phi_1}=\Pi\ket{\Psi_1}\otimes \Pi^r\ket{\Phi_1}     \, .
\end{split}
\end{align}
In this way, we obtain
\begin{align}
    \text{for }k=0:&&\tilde{\Pi}_{x0}=\Pi \, ,&&&&\text{ and for every }k\in\{1,2,...,\hat{k}-1\}:&&\tilde{\Pi}_{xk}\ket{\Psi_k}\otimes \tilde{\Pi}^r_{xk}\ket{\Phi_k}=\Pi\ket{\Psi_k}\otimes \Pi^r\ket{\Phi_k} \, . \label{B2}
\end{align}
Putting everything together, we obtain the desired equation:
\begin{align}
    \bigotimes^{\hat{k}-1}_{k= 0}\Pi_{xk}\ket{\Psi_k}\otimes \bigotimes^{\hat{k}-1}_{k= 1}\tilde{\Pi}^r_{xk}\ket{\Phi_k}&=\bigotimes^{\hat{k}-1}_{k= 0}\alpha_{xk}\cdot \tilde{\Pi}_{xk}\ket{\Psi_k}\otimes\bigotimes^{\hat{k}-1}_{k= 1} \tilde{\Pi}^r_{xk}\ket{\Phi_k}\\
    &=\left(\prod^{\hat{k}-1}_{k= 0}\alpha_{xk}\right)\cdot \tilde{\Pi}_{x0}\ket{\Psi_0}\otimes\bigotimes^{\hat{k}-1}_{k= 1} \tilde{\Pi}_{xk}\ket{\Psi_k}\otimes\tilde{\Pi}^r_{xk}\ket{\Phi_k}\\
    &=\left(\prod^{\hat{k}-1}_{k= 0}\alpha_{xk}\right)\cdot \Pi\ket{\Psi_0}\otimes\bigotimes^{\hat{k}-1}_{k= 1} \Pi\ket{\Psi_k}\otimes\Pi^r\ket{\Phi_k}\\
    &=\left(\prod_{\substack{n-1\geq j\geq 0\\ \text{and } n-1\geq i>j\\\text{with } \sigma_x(i)<\sigma_x(j) }}\alpha_{\sigma_x(i)\sigma_x(j)}\right)\cdot \Pi\ket{\Psi_0}\otimes\bigotimes^{\hat{k}-1}_{k= 1} \Pi\ket{\Psi_k}\otimes\Pi^r\ket{\Phi_k}\\
    &=\Pi_x\ket{\Psi_0}\otimes\bigotimes^{\hat{k}-1}_{k= 1} \Pi\ket{\Psi_k}\otimes\Pi^r\ket{\Phi_k}
\end{align}
where we used Eq.~\eqref{B1} in the first step, Eq.~\eqref{B2} in the third step, Eq.~\eqref{B3} in the fourth step and Eq.~\eqref{prod} in the last one. This completes the proof.
\end{proof}

\newpage
\section{$O(n\log{n})$-algorithm for $n=8$} \label{appedixC}
Here, we present the algorithm that solves the FPP with the factoradic labeling of the permutations for $n=8$. It consists of 14 $d$-dimensional target systems and 21 control qubits. The gate $U_0$ appears once on every target system and the gates $U_1, ..., U_7$ appear each six times. Hence, in total 56 black-box unitaries are used. Note that the shortest known string containing all 8! permutations of the eight unitaries $U_0, ..., U_7$ has length 51 \cite{zalinescu}. Nevertheless, if we omit the target systems $\ket{\Psi_{8,1}}$, $\ket{\Psi_{8,2}}$, $\ket{\Psi_{8,3}}$ and $\ket{\Psi_{8,8}}$ from our circuit together with the involved black-box unitaries and the corresponding control qubits $\ket{c_{1,3}}$, $\ket{c_{2,3}}$ and $\ket{c_{3,3}}$ (since they are unnecessary to represent every number $x\in\{0,...,8!-1\}$), we save 10 queries and solve the problem by calling only 46 black-box unitaries.

\newpage
\smaller[2]
\rotatebox{90}{
$$\Qcircuit @C=0.28em @R=0.6em {
\lstick{\ket{c_{7,1}}}    & \qw & \ctrl{21}  & \qw        & \qw        & \qw        & \qw        & \qw        & \qw        & \qw        & \qw        & \qw        & \qw        & \qw        & \qw        & \qw        & \qw        & \qw        & \qw        & \qw        & \qw        & \qw        & \qw        & \qw          & \qw        & \qw        & \qw        & \qw        & \qw        & \qw        & \qw        & \qw        & \qw        & \qw        & \qw        & \qw        & \qw        & \qw        & \qw        & \qw        & \qw        & \qw        & \qw        & \qw        & \ctrlo{21} & \qw & \rstick{(\omega^{4\cdot 7!\cdot y})} \\
\lstick{\ket{c_{7,2}}}    & \qw & \qw        & \ctrl{24}  & \qw        & \qw        & \qw        & \qw        & \qw        & \qw        & \qw        & \qw        & \qw        & \qw        & \qw        & \qw        & \qw        & \qw        & \qw        & \qw        & \qw        & \qw        & \qw        & \qw          & \qw        & \qw        & \qw        & \qw        & \qw        & \qw        & \qw        & \qw        & \qw        & \qw        & \qw        & \qw        & \qw        & \qw        & \qw        & \qw        & \qw        & \qw        & \qw        & \ctrlo{24} & \qw        & \qw & \rstick{(\omega^{2\cdot 7!\cdot y})} \\
\lstick{\ket{c_{7,3}}}    & \qw & \qw        & \qw        & \ctrl{31}  & \qw        & \qw        & \qw        & \qw        & \qw        & \qw        & \qw        & \qw        & \qw        & \qw        & \qw        & \qw        & \qw        & \qw        & \qw        & \qw        & \qw        & \qw        & \qw          & \qw        & \qw        & \qw        & \qw        & \qw        & \qw        & \qw        & \qw        & \qw        & \qw        & \qw        & \qw        & \qw        & \qw        & \qw        & \qw        & \qw        & \qw        & \ctrlo{31} & \qw        & \qw        & \qw & \rstick{(\omega^{1\cdot 7!\cdot y})} \\
\lstick{\ket{c_{6,1}}}    & \qw & \qw        & \qw        & \qw        & \ctrl{19}  & \qw        & \qw        & \qw        & \qw        & \qw        & \qw        & \qw        & \qw        & \qw        & \qw        & \qw        & \qw        & \qw        & \qw        & \qw        & \qw        & \qw        & \qw          & \qw        & \qw        & \qw        & \qw        & \qw        & \qw        & \qw        & \qw        & \qw        & \qw        & \qw        & \qw        & \qw        & \qw        & \qw        & \qw        & \qw        & \ctrlo{19} & \qw        & \qw        & \qw        & \qw & \rstick{(\omega^{3\cdot 6!\cdot y})} \\
\lstick{\ket{c_{6,2}}}    & \qw & \qw        & \qw        & \qw        & \qw        & \ctrl{20}  & \qw        & \qw        & \qw        & \qw        & \qw        & \qw        & \qw        & \qw        & \qw        & \qw        & \qw        & \qw        & \qw        & \qw        & \qw        & \qw        & \qw          & \qw        & \qw        & \qw        & \qw        & \qw        & \qw        & \qw        & \qw        & \qw        & \qw        & \qw        & \qw        & \qw        & \qw        & \qw        & \qw        & \ctrlo{20} & \qw        & \qw        & \qw        & \qw        & \qw & \rstick{(\omega^{2\cdot 6!\cdot y})} \\
\lstick{\ket{c_{6,3}}}    & \qw & \qw        & \qw        & \qw        & \qw        & \qw        & \ctrl{27}  & \qw        & \qw        & \qw        & \qw        & \qw        & \qw        & \qw        & \qw        & \qw        & \qw        & \qw        & \qw        & \qw        & \qw        & \qw        & \qw          & \qw        & \qw        & \qw        & \qw        & \qw        & \qw        & \qw        & \qw        & \qw        & \qw        & \qw        & \qw        & \qw        & \qw        & \qw        & \ctrlo{27} & \qw        & \qw        & \qw        & \qw        & \qw        & \qw & \rstick{(\omega^{1\cdot 6!\cdot y})} \\
\lstick{\ket{c_{5,1}}}    & \qw & \qw        & \qw        & \qw        & \qw        & \qw        & \qw        & \ctrl{15}  & \qw        & \qw        & \qw        & \qw        & \qw        & \qw        & \qw        & \qw        & \qw        & \qw        & \qw        & \qw        & \qw        & \qw        & \qw          & \qw        & \qw        & \qw        & \qw        & \qw        & \qw        & \qw        & \qw        & \qw        & \qw        & \qw        & \qw        & \qw        & \qw        & \ctrlo{15} & \qw        & \qw        & \qw        & \qw        & \qw        & \qw        & \qw & \rstick{(\omega^{3\cdot 5!\cdot y})} \\
\lstick{\ket{c_{5,2}}}    & \qw & \qw        & \qw        & \qw        & \qw        & \qw        & \qw        & \qw        & \ctrl{16}  & \qw        & \qw        & \qw        & \qw        & \qw        & \qw        & \qw        & \qw        & \qw        & \qw        & \qw        & \qw        & \qw        & \qw          & \qw        & \qw        & \qw        & \qw        & \qw        & \qw        & \qw        & \qw        & \qw        & \qw        & \qw        & \qw        & \qw        & \ctrlo{16} & \qw        & \qw        & \qw        & \qw        & \qw        & \qw        & \qw        & \qw & \rstick{(\omega^{2\cdot 5!\cdot y})} \\
\lstick{\ket{c_{5,3}}}    & \qw & \qw        & \qw        & \qw        & \qw        & \qw        & \qw        & \qw        & \qw        & \ctrl{23}  & \qw        & \qw        & \qw        & \qw        & \qw        & \qw        & \qw        & \qw        & \qw        & \qw        & \qw        & \qw        & \qw          & \qw        & \qw        & \qw        & \qw        & \qw        & \qw        & \qw        & \qw        & \qw        & \qw        & \qw        & \qw        & \ctrlo{23} & \qw        & \qw        & \qw        & \qw        & \qw        & \qw        & \qw        & \qw        & \qw & \rstick{(\omega^{1\cdot 5!\cdot y})} \\
\lstick{\ket{c_{4,1}}}    & \qw & \qw        & \qw        & \qw        & \qw        & \qw        & \qw        & \qw        & \qw        & \qw        & \ctrl{13}  & \qw        & \qw        & \qw        & \qw        & \qw        & \qw        & \qw        & \qw        & \qw        & \qw        & \qw        & \qw          & \qw        & \qw        & \qw        & \qw        & \qw        & \qw        & \qw        & \qw        & \qw        & \qw        & \qw        & \ctrlo{13} & \qw        & \qw        & \qw        & \qw        & \qw        & \qw        & \qw        & \qw        & \qw        & \qw & \rstick{(\omega^{2\cdot 4!\cdot y})} \\
\lstick{\ket{c_{4,2}}}    & \qw & \qw        & \qw        & \qw        & \qw        & \qw        & \qw        & \qw        & \qw        & \qw        & \qw        & \ctrl{16}  & \qw        & \qw        & \qw        & \qw        & \qw        & \qw        & \qw        & \qw        & \qw        & \qw        & \qw          & \qw        & \qw        & \qw        & \qw        & \qw        & \qw        & \qw        & \qw        & \qw        & \qw        & \ctrlo{16} & \qw        & \qw        & \qw        & \qw        & \qw        & \qw        & \qw        & \qw        & \qw        & \qw        & \qw & \rstick{(\omega^{1\cdot 4!\cdot y})} \\
\lstick{\ket{c_{4,3}}}    & \qw & \qw        & \qw        & \qw        & \qw        & \qw        & \qw        & \qw        & \qw        & \qw        & \qw        & \qw        & \ctrl{19}  & \qw        & \qw        & \qw        & \qw        & \qw        & \qw        & \qw        & \qw        & \qw        & \qw          & \qw        & \qw        & \qw        & \qw        & \qw        & \qw        & \qw        & \qw        & \qw        & \ctrlo{19} & \qw        & \qw        & \qw        & \qw        & \qw        & \qw        & \qw        & \qw        & \qw        & \qw        & \qw        & \qw & \rstick{(\omega^{1\cdot 4!\cdot y})} \\
\lstick{\ket{c_{3,1}}}    & \qw & \qw        & \qw        & \qw        & \qw        & \qw        & \qw        & \qw        & \qw        & \qw        & \qw        & \qw        & \qw        & \ctrl{9}   & \qw        & \qw        & \qw        & \qw        & \qw        & \qw        & \qw        & \qw        & \qw          & \qw        & \qw        & \qw        & \qw        & \qw        & \qw        & \qw        & \qw        & \ctrlo{9}  & \qw        & \qw        & \qw        & \qw        & \qw        & \qw        & \qw        & \qw        & \qw        & \qw        & \qw        & \qw        & \qw & \rstick{(\omega^{2\cdot 3!\cdot y})} \\
\lstick{\ket{c_{3,2}}}    & \qw & \qw        & \qw        & \qw        & \qw        & \qw        & \qw        & \qw        & \qw        & \qw        & \qw        & \qw        & \qw        & \qw        & \ctrl{12}  & \qw        & \qw        & \qw        & \qw        & \qw        & \qw        & \qw        & \qw          & \qw        & \qw        & \qw        & \qw        & \qw        & \qw        & \qw        & \ctrlo{12} & \qw        & \qw        & \qw        & \qw        & \qw        & \qw        & \qw        & \qw        & \qw        & \qw        & \qw        & \qw        & \qw        & \qw & \rstick{(\omega^{1\cdot 3!\cdot y})} \\
\lstick{\ket{c_{3,3}}}    & \qw & \qw        & \qw        & \qw        & \qw        & \qw        & \qw        & \qw        & \qw        & \qw        & \qw        & \qw        & \qw        & \qw        & \qw        & \ctrl{15}  & \qw        & \qw        & \qw        & \qw        & \qw        & \qw        & \qw          & \qw        & \qw        & \qw        & \qw        & \qw        & \qw        & \ctrlo{15} & \qw        & \qw        & \qw        & \qw        & \qw        & \qw        & \qw        & \qw        & \qw        & \qw        & \qw        & \qw        & \qw        & \qw        & \qw & \rstick{(\omega^{1\cdot 3!\cdot y})} \\
\lstick{\ket{c_{2,1}}}    & \qw & \qw        & \qw        & \qw        & \qw        & \qw        & \qw        & \qw        & \qw        & \qw        & \qw        & \qw        & \qw        & \qw        & \qw        & \qw        & \ctrl{7}   & \qw        & \qw        & \qw        & \qw        & \qw        & \qw          & \qw        & \qw        & \qw        & \qw        & \qw        & \ctrlo{7}  & \qw        & \qw        & \qw        & \qw        & \qw        & \qw        & \qw        & \qw        & \qw        & \qw        & \qw        & \qw        & \qw        & \qw        & \qw        & \qw & \rstick{(\omega^{1\cdot 2!\cdot y})} \\
\lstick{\ket{c_{2,2}}}    & \qw & \qw        & \qw        & \qw        & \qw        & \qw        & \qw        & \qw        & \qw        & \qw        & \qw        & \qw        & \qw        & \qw        & \qw        & \qw        & \qw        & \ctrl{8}   & \qw        & \qw        & \qw        & \qw        & \qw          & \qw        & \qw        & \qw        & \qw        & \ctrlo{8}  & \qw        & \qw        & \qw        & \qw        & \qw        & \qw        & \qw        & \qw        & \qw        & \qw        & \qw        & \qw        & \qw        & \qw        & \qw        & \qw        & \qw & \rstick{(\omega^{1\cdot 2!\cdot y})} \\
\lstick{\ket{c_{2,3}}}    & \qw & \qw        & \qw        & \qw        & \qw        & \qw        & \qw        & \qw        & \qw        & \qw        & \qw        & \qw        & \qw        & \qw        & \qw        & \qw        & \qw        & \qw        & \ctrl{11}  & \qw        & \qw        & \qw        & \qw          & \qw        & \qw        & \qw        & \ctrlo{11} & \qw        & \qw        & \qw        & \qw        & \qw        & \qw        & \qw        & \qw        & \qw        & \qw        & \qw        & \qw        & \qw        & \qw        & \qw        & \qw        & \qw        & \qw & \rstick{(\omega^{1\cdot 2!\cdot y})} \\
\lstick{\ket{c_{1,1}}}    & \qw & \qw        & \qw        & \qw        & \qw        & \qw        & \qw        & \qw        & \qw        & \qw        & \qw        & \qw        & \qw        & \qw        & \qw        & \qw        & \qw        & \qw        & \qw        & \ctrl{3}   & \qw        & \qw        & \qw          & \qw        & \qw        & \ctrlo{3}  & \qw        & \qw        & \qw        & \qw        & \qw        & \qw        & \qw        & \qw        & \qw        & \qw        & \qw        & \qw        & \qw        & \qw        & \qw        & \qw        & \qw        & \qw        & \qw & \rstick{(\omega^{1\cdot 1!\cdot y})} \\
\lstick{\ket{c_{1,2}}}    & \qw & \qw        & \qw        & \qw        & \qw        & \qw        & \qw        & \qw        & \qw        & \qw        & \qw        & \qw        & \qw        & \qw        & \qw        & \qw        & \qw        & \qw        & \qw        & \qw        & \ctrl{4}   & \qw        & \qw          & \qw        & \ctrlo{4}  & \qw        & \qw        & \qw        & \qw        & \qw        & \qw        & \qw        & \qw        & \qw        & \qw        & \qw        & \qw        & \qw        & \qw        & \qw        & \qw        & \qw        & \qw        & \qw        & \qw & \rstick{(\omega^{1\cdot 1!\cdot y})} \\
\lstick{\ket{c_{1,3}}}    & \qw & \qw        & \qw        & \qw        & \qw        & \qw        & \qw        & \qw        & \qw        & \qw        & \qw        & \qw        & \qw        & \qw        & \qw        & \qw        & \qw        & \qw        & \qw        & \qw        & \qw        & \ctrl{7}   & \qw          & \ctrlo{7}  & \qw        & \qw        & \qw        & \qw        & \qw        & \qw        & \qw        & \qw        & \qw        & \qw        & \qw        & \qw        & \qw        & \qw        & \qw        & \qw        & \qw        & \qw        & \qw        & \qw        & \qw & \rstick{(\omega^{1\cdot 1!\cdot y})} \\
\lstick{\ket{\Psi_{2,1}}} & \qw & \gate{U_7} & \qw        & \qw        & \qw        & \qw        & \qw        & \gate{U_5} & \qw        & \qw        & \qw        & \qw        & \qw        & \gate{U_3} & \qw        & \qw        & \qw        & \qw        & \qw        & \gate{U_1} & \qw        & \qw        & \gate{U_{0}} & \qw        & \qw        & \gate{U_1} & \qw        & \qw        & \qw        & \qw        & \qw        & \gate{U_3} & \qw        & \qw        & \qw        & \qw        & \qw        & \gate{U_5} & \qw        & \qw        & \qw        & \qw        & \qw        & \gate{U_7} & \qw &                                      \\
\lstick{\ket{\Psi_{2,2}}} & \qw & \qw        & \qw        & \qw        & \gate{U_6} & \qw        & \qw        & \qw        & \qw        & \qw        & \gate{U_4} & \qw        & \qw        & \qw        & \qw        & \qw        & \gate{U_2} & \qw        & \qw        & \qw        & \qw        & \qw        & \gate{U_{0}} & \qw        & \qw        & \qw        & \qw        & \qw        & \gate{U_2} & \qw        & \qw        & \qw        & \qw        & \qw        & \gate{U_4} & \qw        & \qw        & \qw        & \qw        & \qw        & \gate{U_6} & \qw        & \qw        & \qw        & \qw &                                      \\
\lstick{\ket{\Psi_{4,1}}} & \qw & \qw        & \qw        & \qw        & \qw        & \qw        & \qw        & \qw        & \gate{U_5} & \qw        & \qw        & \qw        & \qw        & \qw        & \qw        & \qw        & \qw        & \qw        & \qw        & \qw        & \gate{U_1} & \qw        & \gate{U_{0}} & \qw        & \gate{U_1} & \qw        & \qw        & \qw        & \qw        & \qw        & \qw        & \qw        & \qw        & \qw        & \qw        & \qw        & \gate{U_5} & \qw        & \qw        & \qw        & \qw        & \qw        & \qw        & \qw        & \qw &                                      \\
\lstick{\ket{\Psi_{4,2}}} & \qw & \qw        & \qw        & \qw        & \qw        & \gate{U_6} & \qw        & \qw        & \qw        & \qw        & \qw        & \qw        & \qw        & \qw        & \qw        & \qw        & \qw        & \gate{U_2} & \qw        & \qw        & \qw        & \qw        & \gate{U_{0}} & \qw        & \qw        & \qw        & \qw        & \gate{U_2} & \qw        & \qw        & \qw        & \qw        & \qw        & \qw        & \qw        & \qw        & \qw        & \qw        & \qw        & \gate{U_6} & \qw        & \qw        & \qw        & \qw        & \qw &                                      \\
\lstick{\ket{\Psi_{4,3}}} & \qw & \qw        & \gate{U_7} & \qw        & \qw        & \qw        & \qw        & \qw        & \qw        & \qw        & \qw        & \qw        & \qw        & \qw        & \gate{U_3} & \qw        & \qw        & \qw        & \qw        & \qw        & \qw        & \qw        & \gate{U_{0}} & \qw        & \qw        & \qw        & \qw        & \qw        & \qw        & \qw        & \gate{U_3} & \qw        & \qw        & \qw        & \qw        & \qw        & \qw        & \qw        & \qw        & \qw        & \qw        & \qw        & \gate{U_7} & \qw        & \qw &                                      \\
\lstick{\ket{\Psi_{4,4}}} & \qw & \qw        & \qw        & \qw        & \qw        & \qw        & \qw        & \qw        & \qw        & \qw        & \qw        & \gate{U_4} & \qw        & \qw        & \qw        & \qw        & \qw        & \qw        & \qw        & \qw        & \qw        & \qw        & \gate{U_{0}} & \qw        & \qw        & \qw        & \qw        & \qw        & \qw        & \qw        & \qw        & \qw        & \qw        & \gate{U_4} & \qw        & \qw        & \qw        & \qw        & \qw        & \qw        & \qw        & \qw        & \qw        & \qw        & \qw &                                      \\
\lstick{\ket{\Psi_{8,1}}} & \qw & \qw        & \qw        & \qw        & \qw        & \qw        & \qw        & \qw        & \qw        & \qw        & \qw        & \qw        & \qw        & \qw        & \qw        & \qw        & \qw        & \qw        & \qw        & \qw        & \qw        & \gate{U_1} & \gate{U_{0}} & \gate{U_1} & \qw        & \qw        & \qw        & \qw        & \qw        & \qw        & \qw        & \qw        & \qw        & \qw        & \qw        & \qw        & \qw        & \qw        & \qw        & \qw        & \qw        & \qw        & \qw        & \qw        & \qw &                                      \\
\lstick{\ket{\Psi_{8,2}}} & \qw & \qw        & \qw        & \qw        & \qw        & \qw        & \qw        & \qw        & \qw        & \qw        & \qw        & \qw        & \qw        & \qw        & \qw        & \qw        & \qw        & \qw        & \gate{U_2} & \qw        & \qw        & \qw        & \gate{U_{0}} & \qw        & \qw        & \qw        & \gate{U_2} & \qw        & \qw        & \qw        & \qw        & \qw        & \qw        & \qw        & \qw        & \qw        & \qw        & \qw        & \qw        & \qw        & \qw        & \qw        & \qw        & \qw        & \qw &                                      \\
\lstick{\ket{\Psi_{8,3}}} & \qw & \qw        & \qw        & \qw        & \qw        & \qw        & \qw        & \qw        & \qw        & \qw        & \qw        & \qw        & \qw        & \qw        & \qw        & \gate{U_3} & \qw        & \qw        & \qw        & \qw        & \qw        & \qw        & \gate{U_{0}} & \qw        & \qw        & \qw        & \qw        & \qw        & \qw        & \gate{U_3} & \qw        & \qw        & \qw        & \qw        & \qw        & \qw        & \qw        & \qw        & \qw        & \qw        & \qw        & \qw        & \qw        & \qw        & \qw &                                      \\
\lstick{\ket{\Psi_{8,4}}} & \qw & \qw        & \qw        & \qw        & \qw        & \qw        & \qw        & \qw        & \qw        & \qw        & \qw        & \qw        & \gate{U_4} & \qw        & \qw        & \qw        & \qw        & \qw        & \qw        & \qw        & \qw        & \qw        & \gate{U_{0}} & \qw        & \qw        & \qw        & \qw        & \qw        & \qw        & \qw        & \qw        & \qw        & \gate{U_4} & \qw        & \qw        & \qw        & \qw        & \qw        & \qw        & \qw        & \qw        & \qw        & \qw        & \qw        & \qw &                                      \\
\lstick{\ket{\Psi_{8,5}}} & \qw & \qw        & \qw        & \qw        & \qw        & \qw        & \qw        & \qw        & \qw        & \gate{U_5} & \qw        & \qw        & \qw        & \qw        & \qw        & \qw        & \qw        & \qw        & \qw        & \qw        & \qw        & \qw        & \gate{U_{0}} & \qw        & \qw        & \qw        & \qw        & \qw        & \qw        & \qw        & \qw        & \qw        & \qw        & \qw        & \qw        & \gate{U_5} & \qw        & \qw        & \qw        & \qw        & \qw        & \qw        & \qw        & \qw        & \qw &                                      \\
\lstick{\ket{\Psi_{8,6}}} & \qw & \qw        & \qw        & \qw        & \qw        & \qw        & \gate{U_6} & \qw        & \qw        & \qw        & \qw        & \qw        & \qw        & \qw        & \qw        & \qw        & \qw        & \qw        & \qw        & \qw        & \qw        & \qw        & \gate{U_{0}} & \qw        & \qw        & \qw        & \qw        & \qw        & \qw        & \qw        & \qw        & \qw        & \qw        & \qw        & \qw        & \qw        & \qw        & \qw        & \gate{U_6} & \qw        & \qw        & \qw        & \qw        & \qw        & \qw &                                      \\
\lstick{\ket{\Psi_{8,7}}} & \qw & \qw        & \qw        & \gate{U_7} & \qw        & \qw        & \qw        & \qw        & \qw        & \qw        & \qw        & \qw        & \qw        & \qw        & \qw        & \qw        & \qw        & \qw        & \qw        & \qw        & \qw        & \qw        & \gate{U_{0}} & \qw        & \qw        & \qw        & \qw        & \qw        & \qw        & \qw        & \qw        & \qw        & \qw        & \qw        & \qw        & \qw        & \qw        & \qw        & \qw        & \qw        & \qw        & \gate{U_7} & \qw        & \qw        & \qw &                                      \\
\lstick{\ket{\Psi_{8,8}}} & \qw & \qw        & \qw        & \qw        & \qw        & \qw        & \qw        & \qw        & \qw        & \qw        & \qw        & \qw        & \qw        & \qw        & \qw        & \qw        & \qw        & \qw        & \qw        & \qw        & \qw        & \qw        & \gate{U_{0}} & \qw        & \qw        & \qw        & \qw        & \qw        & \qw        & \qw        & \qw        & \qw        & \qw        & \qw        & \qw        & \qw        & \qw        & \qw        & \qw        & \qw        & \qw        & \qw        & \qw        & \qw        & \qw &    
}$$}
\normalsize

\end{widetext}

\end{document}